\renewcommand{\algocf@captiontext}[2]{#1\algocf@typo. \AlCapFnt{}#2} 
\def\@algocf@capt@plain{top}
\renewcommand{\algocf@makecaption}[2]{%
  \addtolength{\hsize}{\algomargin}%
  \sbox\@tempboxa{\algocf@captiontext{#1}{#2}}%
  \ifdim\wd\@tempboxa >\hsize
    \hskip .5\algomargin%
    \parbox[t]{\hsize}{\algocf@captiontext{#1}{#2}}
  \else%
    \global\@minipagefalse%
    \hbox to\hsize{\box\@tempboxa}
  \fi%
  \addtolength{\hsize}{-\algomargin}%
}
\def\E{\mathbb{E}}
\def\E{\mathbbmss{E}}
\def\P{\mathbbmss{P}}
\newcommand{\indep}{\rotatebox[origin=c]{90}{$\models$}}
\def\Bka{{\it Biometrika}}
\begin{document}

\nolinenumbers
\markboth{A. Ertefaie}{DTRs in Infinite-Horizon Settings}

\title{Constructing Dynamic Treatment Regimes in Infinite-Horizon Settings}

\author{Ashkan Ertefaie}
\affil{Department of Statistics, The Wharton school, University of Pennsylvania\\ 3730 Walnut Street, Philadelphia, PA, USA 
 \email{ertefaie@wharton.upenn.edu} }

\maketitle

\begin{abstract}
The application of  existing methods for constructing optimal dynamic treatment regimes is limited to cases where investigators are interested in optimizing a utility function over a fixed period of time (finite horizon). In this manuscript, we develop an inferential procedure based on temporal difference residuals for optimal dynamic treatment regimes in infinite-horizon settings, where there is no a priori fixed end of follow-up point. The proposed method can be used to determine the optimal regime in chronic diseases where patients are monitored and treated throughout their life. We derive large sample results necessary for conducting inference. We also simulate a cohort of patients with diabetes to mimic the third wave of the National Health and Nutrition Examination Survey, and we examine the performance of the proposed method in controlling the level of hemoglobin A1c. Supplementary materials for this article are available online.
\end{abstract}

\begin{keywords}
Action-value function; Causal inference; Backward induction; Temporal difference residual.
\end{keywords}

\section{Introduction}

A dynamic treatment regime (DTR) is a treatment process that considers patients' individual characteristics and their ongoing performance to decide which treatment option to assign. DTRs can, potentially, reduce  side effects and treatment costs,  which makes the process attractive for policy makers. The optimal DTR is the one that, if followed, yields the most favorable outcome on average. 
 Depending on the context, the DTR is also called an adaptive intervention \citep{collins2004conceptual} or adaptive strategy \citep{lavori2000design}.


The goal of this manuscript is to devise a new methodology that can be used to construct the optimal DTR in infinite-horizon settings (i.e., when the number of decision points is not necessarily  fixed for all  individuals). The estimation procedure, however, is based on  observational data collected over a fixed period of time that includes many decision points. One   potential application of our method is to estimate the optimal treatment regime for chronic diseases using data extracted from an electronic medical record data set during a fixed period of time.



This work was motivated by the National Health and Nutrition Examination Survey (NHANES), which was designed to assess the health status of adults and children  in the United States. \cite{timbie2010diminishing} simulates a cohort of subjects  diagnosed with diabetes that mimics the third wave of the NHANES and uses this cohort to evaluate the ability of available treatments to control risk factors. Specifically, Timbie and colleagues' study was designed to manage the risk factors for vascular complications such as high blood pressure, high cholesterol and high blood glucose \citep{grundy2004implications, hunt2008american}. In this manuscript, we simulate a  cohort similar to Timbie's.  Our focus is on constructing a DTR for lowering  hemoglobin A1c  among patients with diabetes. 


One challenge in constructing an optimal regime is  avoiding treatments  that are optimal in the short term but do not result in an optimal long-term outcome. To address this challenge, \cite{murphy2003optimal} introduced a method based on  backward induction (dynamic programming) to estimate the optimal regime using  experimental or observational data. Murphy's method starts from the last decision point and finds the treatment option that optimizes the outcome and goes backward in  time to find the best treatment regime for all the decision points \citep{ bather2000decision, jordan2002introduction}. More specifically, backward induction maps the covariate history of each individual to an optimal regime. Another method was introduced by \cite{robins2004optimal} using  structural nested mean models (SNMMs). The key notion in Murphy's and Robins' methods  is that the optimal regime can be characterized by just modeling the difference between the outcome under different treatment regimes, rather than the full outcome model.  \cite{robins2004optimal} proposes G-estimation as a tool to estimate the parameters of SNMMs, while \cite{murphy2003optimal} uses a least square characterization method  \citep{moodie2007demystifying}. 


  Q-learning, a reinforcement learning algorithm,  is also widely used in constructing the optimal regime \citep{murphy2006methodological, zhao2009reinforcement, chakraborty2010inference, nahum2012q}. Q-learning is an extension of the standard regression method that can be used with longitudinal data in which treatments vary over time. 
\cite{kosorok2011annals} introduce a new Q-learning method that can be used when individuals are subject to right censoring. Their new method creates a pseudo population in which everyone has an equal number of decision points and they show that the results obtained by the pseudo population can be  translated to the original problem \citep{zhao2011reinforcement, zhang2012robust, zhang2013robust}. \cite{schulte2012q} provides a self-contained description of different methods for estimating the optimal treatment regime in finite horizon settings.




The existing methods in the statistics literature are specifically designed to estimate the optimal treatment regime that optimizes a utility function over a fixed period of time. However, in this manuscript our inferential goal is to construct the optimal regime in infinite horizon settings with data that are collected over a fixed period of time. This requires a methodology that estimates the Q-function and the optimal decision rule without the time index. We achieve this by developing an estimating equation that  estimates the optimal regime without requiring backward induction from the last to the first decision point.   In order to  capture the disease dynamic and the long-term treatment effects, our dataset should contain a long trajectory of data with many decision points. 

The remainder of this manuscript is organized as follows.  Section \ref{sec:ggq} explains the data structure and presents the proposed method. In Section \ref{sec:asymp}, we develop asymptotic properties of the method.  We conduct a simulation study in Section \ref{sec:sim} to examine the performance of our method. The last section contains some concluding remarks.  All the proofs are relegated to an online supplementary document.

\section{Constructing the optimal regime}
\label{sec:ggq}

\subsection{Data structure}

We study the \index{effect} effect of a time-dependent \index{treatment} treatment  $A_t$ on a function of outcome. Our data set is composed of $n$ i.i.d. trajectories. The $i$th trajectory is composed of the sequence $(X_{i0},A_{i0},...,A_{i(T-1)},X_{iT})$,  where $X_{it}(.)$ is the set of variables measured at the $t$th decision point and $A_{it}$ is the treatment assigned at  that decision point after measuring $X_{it}(.)$. $T$ is the maximal number of decision points, and the observed length of trajectories are allowed to be different.
At each decision point $t$, we  define a variable $S_{it}$ as a summary function of the observed history (such as time-varying covariates, prior response, baseline covariates and treatment history) that depends on, at most, the last $k$ time points. We assume that the support of $S_t$ is the same for all $t$s and denote it as $\mathcal{S}$. If a patient dies before the last decision point, say $t$, we set $S_t=\emptyset$ (absorbing state). Given $S_t=s$, $A_t$ takes values in  $\mathcal{A}_s=\{ 0,1,2,...,m_s \}$ for all $t$ where $m_s<\infty$, $\forall s \in \mathcal{S}$. We set $A_s=\emptyset$ for $s=\emptyset$. The treatment and the summary function history through $t$ are denoted by  $\bar A_t$ and $\bar S_t$, respectively. We use lowercase letters to refer to the possible values of the corresponding capital letter random variable. From this point forward, for simplicity of notation, we drop the subscript $i$.

\subsection{Potential outcomes}

We use a counterfactual  or potential outcomes framework to define the causal effect of interest and to state assumptions. Potential outcomes models were introduced by \cite{neyman1990application} and \cite{rubin1978bayesian} for time-independent treatment and later extended by \cite{robins1986new, robins1987addendum} to assess the time-dependent treatment effect from experimental and observational longitudinal studies.

Associated with each fixed value of the treatment vector $\bar a_m$, we conceptualize a vector of the potential outcomes $\bar S_{m+1}(\bar a_m)=(S_2(a_1),...,S_{m+1}(\bar a_m))$, where $S_{t+1}(\bar a_t)$ is the value of the summary function at the $(t+1)$th decision point that we would have observed had the individual  been assigned the  treatment history $\bar a_t$.

In the potential outcomes framework, we make the following assumptions to identify the causal effect of a dynamic regime.
\begin{enumerate}
\item {\it{Consistency}}: $S_{t+1}(\bar A_t) = S_{t+1}$ for each $t$ 
\item {\it{Sequential randomization}}: $\{S_{t+1}(\bar a_t), S_{t+2}(\bar a_{t+1}),...,S_{T}(\bar a_{T-1})\} \indep A_t | \bar S_t, \bar A_{t-1}=\bar a_{t-1}$.
\end{enumerate}

These assumptions link the potential outcome and the observed data  \citep{robins1994correcting, robins1997causal}. Assumption 1 means that the potential outcome of a treatment regime corresponds to the actual outcome if assigned to that regime. Assumption 2 means that within levels of $S_t$, treatment at time $t$, $A_t$, is randomized.  Throughout this manuscript, we assume that these identifiability assumptions hold.

Besides the  above assumptions, we assume that the data generating law satisfies the following assumptions: 
\begin{itemize}
\item[] {{\it A.1 Markovian assumption}}: Fot each $t$,
\begin{align}
&S_t \indep \bar S_{t-2}, \bar A_{t-2} | S_{t-1}, A_{t-1} \label{as:m1}\\
&A_t \indep  \bar S_{t-1}, \bar A_{t-1} | S_{t} \label{as:m2}
\end{align} 
\item[] {\it{ A.2   Time homogeneity}}: For each $s \in \mathcal{S}$ and $ a\in \mathcal{A}_{s}  $,   
\begin{align*}
p(S_{t+1}\in \mathcal{B} |S_{t}=s, A_t=a)=p(S'\in \mathcal{B} |  S=s,A=a), 
\end{align*}
where $S$ and $S'$ are the summary functions at the previous and the next time, respectively. From this point forward, we refer to $S$ as a {\it{state}} variable.

\item[] {\it{ A.3  Positivity assumption}}:  Let $p^{}_{A|S}(a|s)$ be the conditional probability of receiving treatment $a$ given $S=s$.  For each action $a \in \mathcal{A}_s$ and for each possible value $s$, $p^{}_{A|S}(a|s)>0$.
\end{itemize}

Assumption {{\it A.2}}  means that the conditional distribution of the $S$s does not depend on $t$.  {{\it A.3}} ensures that all treatment options in $\mathcal{A}_s$ have been assigned to some patients (i.e.,  for  each $S=s$, all actions in $\mathcal{A}_s$ are possible).  This assumption is also known as  an {\it {exploration}} assumption.

Assumptions $A.1$ and $A.2$ provide guidance for how to construct the state variable $S_t$. The Markovian assumptions (\ref{as:m1}) and (\ref{as:m2}) seem to be unrealistic in many studies of chronic diseases. But they are not.  This is because, if it is necessary, one can construct the state variable $S_t$ such that it includes previous treatments  and observed intermediate outcomes. Thus, for example, (\ref{as:m2}) does {\it not} indicate that decision makers make the next treatment decision taking into account only the last outcome data, i.e. disregarding the earlier treatments and outcomes, because these information can be included in the preceding state variable, say $S_t$.

 In cases where $S_t$ has to depend on the observed history of the last $k$ time points,  assumption $A.2$ is satisfied only if we ignore the first $k-1$ time points of the observed trajectory of patients. This is because the support of $S_t$ is the same only for $t \geq k$. 





\subsection{The likelihood}

Under assumptions { $A.$1--3}, the distribution of the observed trajectories is composed of the distribution of the trajectory $S_t$ given $(S_{t-1}, A_{t-1})$, say $f_{S'|S,A}$, and the density $p^{}_{A_{}|S}(a|s)$.  The likelihood of the observed trajectory $\{s_{0},a_{0},...,a_{T-1},s_{T}\}$ is given by 
\begin{align} \label{eq:obsl}
f_{S}(s_{0}) p(a_{0}|s_{0}) \prod_{k=1}^{T} f_{S'|S,A}(s_k| s_{k-1},  a_{k-1}) \prod_{k=1}^{T-1} p(a_k| s_{k}).
\end{align}
Expectations with respect to this distribution are denoted by $\E$.

The treatment regime ({\it{policy}}), $\pi$, is a deterministic decision rule where for every $s$, the output $\pi(s)$ is an action $a \in \mathcal{A}_s$, where $ \mathcal{A}_s$ is the space of {\it {feasible}} actions \citep{robins2004optimal, schulte2012q}.  The likelihood of the trajectory $\{s_{0},a_{0},...,a_{T-1},s_{T}\}$ corresponding with this law is   {{
\begin{align}  \label{eq:pil}
f(s_{0}) I(a_{0}= \pi(s_{0})) \prod_{k=1}^{T} f_{S'|S,A}(s_k| s_{k-1}, a_{k-1}) \prod_{k=1}^{T-1} I(a_k = \pi(s_k)).
\end{align} }} 
Expectations with respect to this distribution are denoted by $\E_{\pi}$. Note the likelihood (\ref{eq:pil}) is not well-defined and it may be identical to zero {\it{unless}} $A.3$ holds for each possible value $s$ and $a\in \mathcal{A}_s$. Note that, the observed trajectory $\{s_{0},a_{0},...,a_{T-1},s_{T}\}$ may be truncated by death at time $m$. In this case, we have $S_m=S_{m+1}...=S_{T}=\emptyset$, and $A_m=A_{m+1}...=A_{T}=\emptyset$ and by definition, for all $m' \geq m$, $p(S_{m'+1}=\emptyset|S_{m'}=\emptyset, A_{m'}=\emptyset  )=1$ and $p(A_{m'}=\emptyset|S_{m'}=\emptyset)=1$. 




\subsection{ Preliminaries}

We define the {\it reward} value as  a known function of $(S_{t-1},A_{t-1},S_{t})$ at each time $t$ and denote it by $R_t=r(S_{t-1},A_{t-1},S_{t})$. The reward value is a longitudinal outcome that is coded such that high values are preferable. We set $R_t=0$ if $S_{t-1}=\emptyset$.



 The {\it action-value} function at time $t$, $Q^{\pi}_t(s,a)$, is defined as an expected value of the cumulative discounted reward if taking  treatment  $a$ at state $s$ at time $t$ and following the policy $\pi$ afterward.  In other words, $Q_t^{\pi}(s,a)$ quantifies the quality of policy $\pi$ when $S_t=s$ and $A_t=a$. 
Hence,  $Q^{\pi}_t(s,a)$ is defined as $\E_{\pi} \left[\sum_{k=1}^{\infty} \gamma^{k-1} R_{t+k}| S_{t}=s,A_{t}=a\right]$,  where  $\gamma$ is called a discount factor, $0<\gamma<1$, which is fixed a priori  by the researcher. Note that by definition of the reward function,  $Q^{\pi}_t(\emptyset,a)=0$. Under the Markovian assumption, the action-value function does not depend on $t$. Thus we can drop the subscript $t$ and denote it by $Q^{\pi}(s,a)$. Note that the action-value function $Q^{\pi}(.,.)$ has a finite value when $\gamma<1$ and the rewards are bounded. 

The discount factor $\gamma$  {\it {balances}} the immediate and long-term effect of treatments on the action-value function. If $\gamma=0$, the objective would be maximizing the immediate reward and ignoring the consequences of the action on future rewards or outcomes. As $\gamma$ approaches  one,  future rewards become more important. In other words, $\gamma$ specifies our inferential goal. In Section S4 of the supplementary materials we discuss the effect of the choice of $\gamma$ on the estimated optimal regime.



The action-value function can be written as 
\begin{align*}
Q^{\pi}(s,a) &=  \E_{\pi} \left[\sum_{k=1}^{\infty} \gamma^{k-1} R_{t+k}| S_{t}=s,A_{t}=a\right] \\
              &=   \E_{\pi} \left[ R_{t+1} + \gamma \sum_{k=1}^{\infty} \gamma^{k-1} R_{t+k+1}| S_{t}=s,A_{t}=a\right] \\
              &=   \E_{} \left[ R_{t+1} + \gamma \E_{\pi} \left\{ \sum_{k=1}^{\infty} \gamma^{k-1} R_{t+k+1} | S_{t+1}, A_{t+1}=\pi(S_{t+1})\right\} | S_{t}=s,A_{t}=a\right] \\
              &=  \E_{} \left[ R_{t+1} + \gamma  Q^{\pi} (S_{t+1},\pi(S_{t+1})) | S_t=s,A_t=a  \right].
\end{align*}
The last equation is known as {\it{Bellman equation}} for $Q^{\pi}(s,a)$  \citep{sutton1998reinforcement,si2004handbook}. 
The inner expectation quantifies the quality of policy $\pi$  at time $(t+1)$, in  state $S_{t+1}$ and with treatment  $\pi(S_{t+1})$. Taking treatment $\pi(S_{t+1})$ at time $(t+1)$ ensures treatment policy $\pi$ is followed in the interval $(t,t+1]$. 

Our goal is to construct a treatment policy that, if implemented, would lead to an optimal  action-value function for each pair $(s,a)$. Accordingly, the {\it optimal} action-value function can be defined as 
\begin{align*}
Q^*(s,a) &= \max_{\pi}  \E_{\pi} \left[\sum_{k=1}^{\infty} \gamma^{k-1} R_{t+k}| S_{t}=s,A_{t}=a\right] \\
              &= \max_{\pi}  \E \left[ R_{t+1} + \gamma \E_{\pi} \left\{ \sum_{k=1}^{\infty} \gamma^{k-1} R_{t+k+1} | S_{t+1}, A_{t+1}=a^*\right\} | S_{t}=s,A_{t}=a\right] \\
              &= \E_{} \left[ R_{t+1} + \gamma \max_{\pi}  \E_{\pi} \left\{ \sum_{k=1}^{\infty} \gamma^{k-1} R_{t+k+1} | S_{t+1}, A_{t+1}=a^*\right\} | S_{t}=s,A_{t}=a\right] \\
              &=  \E_{} \left[ R_{t+1} + \gamma  Q^* (S_{t+1},a^*) | S_t=s,A_t=a  \right],
\end{align*}
where $a^* \in \arg \max_a Q^*(S_{t+1},a)$. Taking treatment $a^*$ at time $(t+1)$ ensures that we are taking an optimal treatment in the interval $(t,t+1]$.  The last equality follows from the definition of $Q^*(s,a)$ and can also be written as 
\begin{align}
Q^*(s,a)=\E_{} \left[ R_{t+1} + \gamma  \max_{a'} Q^* (S_{t+1},a') | S_t=s,A_t=a  \right]. 
\label{eq:optBell}
\end{align}
Note that the only distribution involved in the $\E$ is $f_{S'|S,A}$.
Denote any policy $\pi^*$ for which 
\[
Q^*(s,a) =  \E_{\pi^*} \left[\sum_{k=1}^{\infty} \gamma^{k-1} R_{t+k}| S_{t}=s,A_{t}=a\right]
\]
as an optimal policy. So, for state $s$, we can define the optimal policy as $\pi^*(s) = \arg \max_a Q^* (s,a)$ and the optimal {\it value} function as $V^*(s)= Q^*(s,\pi^*(s)) $.

The action-value function $Q^*(s,a)$ can be estimated by turning the recurrence relation (\ref{eq:optBell})  into an update rule that relies on  estimating the conditional density $f_{S'|S,A}$. However, when the cardinality of $(\mathcal{S},\mathcal{A})$ and the dimension of $S_t$ are large, estimation of the conditional densities is infeasible. We refer to this method as the {\it{classical}} approach and explain it in Section 4 \citep{simester2006dynamic, mannor2007bias}.  One way to overcome this limitation is to use a linear function approximation for $Q^*(s,a)$, which is discussed in the following subsection. 

\subsection{ The proposed estimating equation}

 The optimal action-value function (\ref{eq:optBell}) is unknown and needs to be estimated in order to construct the optimal regime. 
Suppose the $Q^*(.,.)$ function can be represented using a linear function of parameters $\theta_0$, 
\[
 Q^*(s,a)=\theta_0^\top \varphi(s,a) ,
\]
where $\theta_0$ is the parameter vector of $p$ dimension and $\varphi(s,a)$ can be any vector of features summarizing the state and treatment pair $(s,a)$ \citep{sutton2009fasta, sutton2009convergentb, maei2010toward}. Features are constructed such that $\varphi(\emptyset,a)=0$. Accordingly, we define  the optimal dynamic treatment regime $\pi^*(s)$ as $\arg \max_a \theta_0^\top  \varphi(s,a)$.

Now we discuss how to estimate the unknown vector of parameters $\theta_0$. First, we define an error term and then we construct an estimating equation. In  view of the Bellman equation (\ref{eq:optBell}), for each $t$, we have 
\begin{align}
\E \left[ R_{t+1} + \gamma \max_{a'} Q^* (S_{t+1},a') - Q^*(S_t,A_t) | S_t=s,A_t=a  \right] =0. \label{eq:Belle}
\end{align}
Thus, the error term at time (t+1) in the linear setting can be defined as $\delta_{t+1}(\theta) = R_{t+1}+\gamma \max_{a'}[\theta^\top \varphi(S_{t+1},a')]-\theta^\top \varphi(S_t,A_t)$, which is known as {\it{temporal difference}} error in computer science literature. In order to account for the influence of the feature function $\varphi(S,A)$ in the estimation of the $\theta$s, we multiply  $\delta_{}(\theta) $ by $\varphi(S,A)$ and define $ \theta_0$ as a value of $\theta$ such that 
 \begin{align}
D(\theta)= \E \left[ \sum_{t=0}^{T-1} \delta_{t+1}( \theta)\varphi(S_t,A_t)^\top \right]=0. \label{eq:dthet}
 \end{align}
 The expectation in the above equation depends on the transition densities $f_{S'|S,A}$ and  $p_{A|S}$. 
Note that as in  (\ref{eq:Belle}),  
\begin{align*}
 D(\theta_0) &= \E\left[\sum_{t=0}^{T-1} \{R_{t+1}+\gamma\max_a [\theta_0^\top  \varphi(S_{t+1},a)]-\theta_0^\top \varphi(S_t,A_t)\} \varphi(S_t,A_t)^\top\right]\\
                    &=  \sum_{t=0}^{T-1} \E\left[ \{R_{t+1}+\gamma\max_a [\theta_0^\top  \varphi(S_{t+1},a)]-\theta_0^\top \varphi(S_t,A_t)\} \varphi(S_t,A_t)^\top\right] \\
                    &= 0.
 \end{align*} 
 Hence, given the observed data, an unbiased estimating equation for $\theta$ can be defined as 
  \begin{align}
\hat D(\theta)= \P_n \left[ \sum_{t=0}^{T-1} \delta_{t+1}( \theta)\varphi(S_t,A_t)^\top \right]=0, \label{eq:gee}
 \end{align}
where $\P_n$ is the empirical average.



\section{Calculation}
\label{sec:asymp}
 In practice, sometimes there is no $\hat \theta$ that solves the system of equations (\ref{eq:gee}), and sometimes the solution  is not unique. One way to deal with this is to take an approach similar to the least square technique and define $\hat \theta$ as a minimizer of an objective function. As in (\ref{eq:dthet}), a simple objective function can be defined as  
\[
  \E \left[ \sum_{t=0}^{T-1}  \delta_{t+1}( \theta)\varphi(S_t,A_t)^\top \right]  \E \left[ \sum_{t=0}^{T-1}  \delta_{t+1}( \theta)\varphi(S_t,A_t)^\top \right]^\top
\]
 \citep{sutton2009convergentb}. The objective function used in this manuscript is the above function  weighted by the inverse of the feature covariance matrix and defined as 
 \begin{align}
M(\theta) = D(\theta)  W^{-1} D(\theta)^\top,
\label{eq:obj3}
\end{align}
where  $W=\E\left[\sum_{t=0}^{T-1} \varphi(S_t,A_t)\varphi(S_t,A_t)^\top  \right]$ is a full-rank matrix. The weight $W^{-1}$ improves the performance of the proposed stochastic minimization algorithm in Section S1 of the supplementary materials.  The function $M(\theta)$ is a generalization of the objective function presented in \cite{maei2010toward}. 

The objective function $M(\theta)$ can be estimated using the observed $(s_t,a_t)$ by 
\begin{align}
\hat M(\theta) = \hat D(\theta)  \hat W^{-1} \hat D(\theta)^\top,
\label{eq:obj4}
\end{align} 
where $\hat D(\theta) = \P_n\left[\sum_{t=0}^{T-1} \delta_{t+1}(\theta) \varphi(S_t,A_t)^\top\right]$ and $\hat W=\P_n\left[\sum_{t=0}^{T-1} \varphi(S_t,A_t)\varphi(S_t,A_t)^\top  \right]$.  Define $\hat \theta \in \arg \min_{\theta} \hat M(\theta)$. Then, the estimated optimal dynamic treatment regime is $\hat \pi(s) = \arg \max_a \hat \theta^\top \varphi(s,a)$. By law of large numbers,  the estimator $\hat M(\theta)$ is a consistent estimator of  $M(\theta)$.


The following theorem presents  the consistency and asymptotic normality of  estimator $\hat \theta$ where the asymptotic normality result relies on the uniqueness of the optimal treatment at each decision point. This allows investigators to test the significance of variables for use in this sequential decision making problem.  Assumptions $A.4-8$ required in this section are listed in  Appendix 3.

\begin{theorem}
\label{th:asymp}
(\textbf{Consistency and asymptotic normality}) For a map  $\theta \rightarrow M(\theta)$, defined in (\ref{eq:obj3}), under assumptions $A.4-7$, any sequence of estimators $\hat \theta$ with  $\hat M(\hat \theta)\leq \hat M(\theta_0)+o_p(1)$ satisfies the following statements:
\begin{itemize}
\item[I.] For small enough $\gamma$, $\sqrt n (\hat \theta - \theta_0) = O_p(1)$.
\item[II.] Under $A.8$, $ \sqrt n (\hat \theta -\theta_0) \rightarrow_d N(0,\Gamma^\top \Sigma \Gamma), $
where 
\begin{align*}
\Sigma&=\E \left[ \left\{\sum_t   \delta_{t+1} \varphi(S_t,A_t)^\top\right\}^\top \left\{\sum_t   \delta_{t+1} \varphi(S_t,A_t)^\top\right\} \right], \\
  \Gamma&= \Bigg[I -  \gamma\left.W^{-1} \E\left( \sum_t  \varphi(S_{t+1},\pi^*) \varphi(S_t,A_t )^\top \right) \right]^\top \\
& \hspace{.5in}\left[W+\gamma^2\E\left( \sum_t  \varphi(S_{t+1},\pi^*) \varphi(S_t,A_t)^\top  \right) 
 W^{-1}\E\left( \sum_t  \varphi(S_{t+1},\pi^*) \varphi(S_t,A_t)^\top  \right)^\top \right.\\
 &\hspace{2.5in} \left. -2\gamma \E\left( \sum_t  \varphi(S_{t+1},\pi^*) \varphi(S_t,A_t)^\top  \right)^\top \right]^{-1},
\end{align*}
where $I$ is an identity matrix and $\delta_{t+1} = [ R_{t+1} + \gamma \max_{a} \theta_0^\top  \varphi(S_{t+1},a) - \theta_0^\top  \varphi(S_t,A_t)]
$.

\end{itemize}
 \end{theorem}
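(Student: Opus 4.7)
The plan is to treat $\hat\theta$ as an M-estimator minimising the empirical criterion $\hat M(\theta)$ whose population counterpart $M(\theta) = D(\theta) W^{-1} D(\theta)^\top$ attains its minimum value $0$ at $\theta_0$, because $D(\theta_0)=0$ by the Bellman identity~(\ref{eq:Belle}). Both parts follow the familiar two-step strategy: first consistency together with the $\sqrt{n}$ rate via a local quadratic lower bound on $M$ combined with a uniform empirical-process bound, then asymptotic normality via a Taylor expansion of the first-order condition around $\theta_0$. The distinctive difficulty of the problem is the non-smooth operator $\max_a\theta^\top\varphi(S_{t+1},a)$ inside the temporal-difference residual; Assumption $A.8$ will be used to remove this non-smoothness on a neighbourhood of $\theta_0$ with probability tending to one.

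For Part I, I would first verify that $\theta_0$ is the unique minimiser of $M$. Convexity and Lipschitz continuity of $\theta\mapsto\max_a\theta^\top\varphi(s,a)$, uniform in $s$ by the boundedness assumptions $A.4$--$A.6$, allow one to show that $D(\theta)=(\gamma B^\top-W)(\theta-\theta_0)+o(\|\theta-\theta_0\|)$ near $\theta_0$, with $B=\E[\sum_t\varphi(S_{t+1},\pi^*)\varphi(S_t,A_t)^\top]$; for $\gamma$ small enough $\gamma B^\top-W$ is a small perturbation of $-W$ and is therefore bounded away from singularity, so $M$ admits the local quadratic minorant $M(\theta)-M(\theta_0)\gtrsim\|\theta-\theta_0\|^2$. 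Boundedness together with standard empirical-process arguments make the relevant class of maps $\{\delta_{t+1}(\theta)\varphi(S_t,A_t)\varphi(S_t,A_t)^\top:\theta\in\Theta\}$ Donsker, yielding $\sup_\theta|\hat M(\theta)-M(\theta)|=O_p(n^{-1/2})$ on compacta. Consistency follows from the argmin continuous mapping argument; combining it with the quadratic minorant, the near-minimum property $\hat M(\hat\theta)\leq\hat M(\theta_0)+o_p(1)$, and a localised modulus-of-continuity bound yields $\|\hat\theta-\theta_0\|=O_p(n^{-1/2})$ by a standard peeling argument.

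For Part II, Assumption $A.8$ gives that $\pi^*(s)=\arg\max_a\theta_0^\top\varphi(s,a)$ is unique for almost every $s$. A stability-of-argmax argument then shows that on a shrinking neighbourhood of $\theta_0$ the argmax appearing in $\delta_{t+1}(\theta)$ coincides with $\pi^*(S_{t+1})$ with probability tending to one, so $\hat M$ is twice continuously differentiable there with high probability and a Taylor expansion of the stationarity condition $\nabla\hat M(\hat\theta)=o_p(n^{-1/2})$ around $\theta_0$ is justified. This yields $\sqrt n(\hat\theta-\theta_0)=-[\nabla^2 M(\theta_0)]^{-1}\sqrt n\,\nabla\hat M(\theta_0)+o_p(1)$. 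Because $D(\theta_0)=0$, the gradient reduces to a linear functional of $\sqrt n\,\hat D(\theta_0)^\top$, which by the multivariate CLT converges to $N(0,\Sigma)$ with $\Sigma$ exactly as in the statement. Rearranging the resulting sandwich formula with $W$ and $B$ produces the stated variance $\Gamma^\top\Sigma\Gamma$.

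The main obstacle is the non-smooth $\max_a$ inside $\delta_{t+1}(\theta)$: standard smooth-M-estimator theorems do not apply, and the proof must verify that the set $\{s:\arg\max_a\theta^\top\varphi(s,a)\neq\pi^*(s)\}$ has probability $o(1)$ as $\theta\to\theta_0$, so that the score and Hessian of $\hat M$ can be evaluated as if the argmax were fixed at $\pi^*$; without $A.8$ one would obtain only a directional derivative at $\theta_0$ and the limit would no longer be Gaussian. A secondary technical point is keeping the Hessian $\nabla^2 M(\theta_0)=2(\gamma B^\top-W)^\top W^{-1}(\gamma B^\top-W)$ positive definite, which is precisely the role of the ``small enough $\gamma$'' clause in Part~I: it prevents $\gamma B^\top$ from cancelling $W$ and thereby collapsing the quadratic lower bound on $M$.
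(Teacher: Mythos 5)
Your overall architecture for Part I (a local quadratic minorant for $M$ near $\theta_0$ plus a uniform modulus-of-continuity bound, then the rate theorem of van der Vaart and Wellner) matches the paper's. But there is a genuine gap in how you obtain the minorant, and it sits exactly where the theorem does its nontrivial work. Part I assumes only $A.4$--$A.7$; the optimal action may be tied at some states. At such states $\theta\mapsto\max_a\theta^\top\varphi(s,a)$ is not differentiable at $\theta_0$, so $D$ admits only a directional derivative
\[
\dot D_{\theta_0}(b)=\E\Bigl[\sum_t\bigl\{\gamma\max_{a\in\pi^*(S_{t+1})}b^\top\varphi(S_{t+1},a)-b^\top\varphi(S_t,A_t)\bigr\}\varphi(S_t,A_t)^\top\Bigr],
\]
which is positively homogeneous but \emph{not} linear in $b$. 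Your expansion $D(\theta)=(\gamma B^\top-W)(\theta-\theta_0)+o(\|\theta-\theta_0\|)$ therefore presupposes $A.8$, which is not available in Part I. The paper instead splits $\dot D_{\theta_0}(b)$ into a linear part $M_1$ (supported on states with a unique argmax; full rank by $A.7$) and a nonlinear part $M_2$ coming from tied states, and obtains $\dot D_{\theta_0}(b)\dot D_{\theta_0}(b)^\top\geq\|b\|^2\,[\lambda_{\min}-2\gamma\sqrt{\lambda_{\max}}\|M_2\|-\gamma^2\|M_2\|^2]$, which is positive only for small $\gamma$. So you have misidentified the role of the ``small enough $\gamma$'' clause: it is not to keep $\gamma B^\top-W$ away from singularity (that is what $A.7$ assumes outright), but to dominate the contribution of the non-unique-argmax states; under $A.8$ that term vanishes, and the paper points out that the $\sqrt n$ rate then holds for every $\gamma\in(0,1)$. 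Without this decomposition your Part I argument does not go through as stated.

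For Part II your route (localize the argmax to $\pi^*$ under $A.8$, Taylor-expand the stationarity condition, read off the sandwich) is a genuinely different path from the paper's, which rescales the criterion, $\hat V(b)=n\hat M(\theta_0+b/\sqrt n)-n\hat M(\theta_0)$, shows $\hat V$ converges to a random quadratic $V(b)$ using its Lemma on differences of maxima to handle the $\max_a$ terms, and invokes Geyer's epi-convergence result to conclude $\arg\min_b\hat V\to_d\arg\min_b V$. Both give the same sandwich limit, but two points are left open in your version: the hypothesis $\hat M(\hat\theta)\leq\hat M(\theta_0)+o_p(1)$ does not by itself yield $\nabla\hat M(\hat\theta)=o_p(n^{-1/2})$, and the claim that the argmax coincides with $\pi^*(S_{t+1})$ with probability tending to one on a shrinking neighbourhood needs either a uniform margin condition or the dominated-convergence argument the paper runs through its Lemma, since $A.8$ gives uniqueness pointwise but no uniform gap over the state space. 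The argmin-of-the-localized-process approach avoids both issues by never differentiating $\hat M$.
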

\begin{proof} See the supplementary material.  \end{proof}

In Section S3 of the supplementary materials, we show that under assumption $A.8$, $\sqrt n (\hat \theta - \theta_0) = O_p(1)$ for any $\gamma \in (0,1)$. The asymptotic variance $\Gamma^\top \Sigma \Gamma$ may be estimated consistently by replacing the expectations with expectations with respect to the empirical measure and replacing $\theta_0$ with its estimate $\hat\theta$. 


The objective function $M(\theta)$ is a non-convex and non-differentiable function of $\theta$, which complicates the estimation process. Standard optimization techniques often fail to find the global minimizer of this function. In Section S1 of the supplementary materials, we present an incremental approach, which is a generalization of greedy gradient Q-learning (GGQ), an iterative  stochastic minimization algorithm,  introduced  by \cite{maei2010toward} as a tool to minimize $M(\theta)$. Hence, from this point forward, we  refer to our proposed method as GGQ.

\section{Simulation Study}
\label{sec:sim}

We simulate  a cohort of patients with diabetes and  focus  on constructing a dynamic treatment regime for maintaining the hemoglobin A1c below 7\%. The A1c-lowering treatments that we  consider in this manuscript are similar to those of \cite{timbie2010diminishing} and include metformin, {sulfonylurea}, {{glitazone}}, and  {{insulin}}. We used treatment discontinuation rates to measure patients' intolerance to treatment and reflect both the side effects and burdens of treatment. The treatment efficacies and discontinuation rates are extracted from \cite{kahn2006glycemic} and \cite{timbie2010diminishing}. We assume that patients who discontinue a treatment do not drop out but just take the next available treatment.  These simulated data mimic the third wave of  NHANES. 

In Section 4.3, we compare the performance of our proposed approach with the {\it classical} approach using a simulated dataset and then in Section 4.4 we perform a Monte Carlo study to examine the asymptotic results. 



\subsection{Overview of the simulation} 

Our study consists of 20 decision points, and  the time between each decision point is 3 months.   Eligible individuals start with {metformin} and augment  with treatments {{sulfonylurea}}, {{glitazone}}, and  {{insulin}} through the follow-up. At each decision point, there are two treatment options: 1) augment the treatment 2) continue the treatment received at the previous decision point.  The discontinuation variable $D$ is generated from a Bernoulli distribution given at the last augmented treatment. $NAT_t$ is the number of augmented treatments by the end of interval $t$ where $NAT_t\in\{0,1,2,3,4\}$. The variable $A_t$ is the augmented treatment at time $t$. As soon as a treatment is augmented the variable $NAT$ increases by one, whether or not the treatment will be discontinued. The death indicator variable $C_t$ at time $t$ is generated as a function of previous observed covariates.

\subsection{Generative model } 

Here are the steps we take to generate the dataset:

\begin{itemize}
\item {\it Baseline variables:} Variables $(BP_{0},Weight_{0},A1c_{0})$ are generated from a multivariate normal distribution with mean (13,160,9.4) and the covariance matrix  $diag(1,1,1)$, where BP is the systolic blood pressure. Also, $NAT_0=D_0=C_0=0$.

\item {\it{Assigned treatment at time $t$:}} 
Given the state variable $NAT_t$, the sets of available treatments are $\mathcal{A}_{NAT_t=0}=\{0,Metformin\}$, $\mathcal{A}_{NAT_t=1}=\{0,Sulfonylurea\}$, $\mathcal{A}_{NAT_t=2}=\{0,Glitazone\}$, $\mathcal{A}_{NAT_t=3}=\{0,Insulin\}$, and $\mathcal{A}_{NAT_t=4}=\{0\}$,
where 0 means continue with the treatment received at the previous decision point. Although the ideal $A1c$ level is below 7\%, \cite{timbie2010diminishing} raises concern about the feasibility and polypharmacy burden needed for treating patients whose $7<A1c<8$. Our simulation study investigates the optimal treatment regime for these patients. More specifically,
\begin{itemize}
\item if $A1c_t<7$, the treatment is not augmented because $A1c$ is under control and $NAT_{t}=NAT_{t-1}$.
\item if $A1c_t>8$ and $NAT_{t-1}<4$, the treatment is augmented with the next available treatment. Hence, $NAT_{t}=NAT_{t-1}+1$. Note that these are patients whose $A1c$ is too high. Thus, the only option is augmenting their treatment.
\item  If $7<A1c_t<8$ and $NAT_{t-1}<4$, then a binary variable $Z_t$ is generated from $Z_t \sim Ber \left(\frac{\exp[-0.2 A1c_{t-1}+0.5 NAT_{t-1}+0.5D_{t-1}]}{1+\exp[-0.2 A1c_{t-1}+0.5 NAT_{t-1}+0.5D_{t-1}]} \right)$, where $D_t$ denotes the discontinuation indicator. 
$Z_t=1$ implies that the patient continues with the same treatment as time $t-1$ and we set $A_t=0$ ($NAT_t=NAT_{t-1}$), while $Z_t=0$ implies that the patient takes the next available treatment (treatment is augmented) and we set $NAT_{t}=NAT_{t-1}+1$. For example, if $7<A1c_t<8$ and $NAT_{t-1}=3$, a patient can be assigned to either augmenting the treatment taken at time $t-1$  with $A_t=insulin$ or continuing with the same treatment as time $t-1$ ($A_t=0$), depending on the generated variable $Z_t$.  

{\it Note:} When   $Z_t=1$, no new treatment is added. Hence,
$\E[A1c_t | Z_t=1, A1c_{t-1}] =\E[A1c_t | A_t=0, A1c_{t-1}] = A1c_{t-1}$.


\end{itemize} 
\item {\it{Treatment discontinuation indicator at time $t$:}} A binary variable $D_t$ is generated from a Bernoulli distribution given the last augmented treatment. For all $t$, the treatment discontinuation rates are $p(D_t|A_{t-1}=metformin)=p(D_t|A_{t-1}=sulfonylurea)=p(D_t|A_{t-1}=glitazone)=0.20$, and $p(D_t|A_{t-1}=insulin)=0.35$.

{\it Note:} We assume no treatment discontinuation for patients who  are taking the same treatment at time $t$ as at time $t-1$ (i.e., $p[D_t=1|A_{t-1}=0]=0$).  

\item {\it{Intermediate outcome $A1c$ at time $t$:}} To avoid variance inflation through time, we use the following generative model for $A1c$ at time $t$, $A1c_t = \frac{A1c_{t-1} - \mu_{t-1} +  \epsilon  }{\sqrt{(1+\sigma^2_{\epsilon})}}+\mu_{t}$,
where $\epsilon \sim N(0,\sigma_{\epsilon}=0.5)$ and   {\small{
\[ \mu_t=\E[A1c_t|A1c_{t-1},NAT_{t-1},A_t ,D_t]= \left\{ \begin{array}{ll}
         \mu_{t-1}(1-\tau_{A_t}) & \mbox{if  $A1c_{t-1}>7,  NAT_{t-1}<4,   A_t\neq0, D_t\neq1,  $ }\\
        \mu_{t-1} & \mbox{o.w.}\end{array} \right. \] }}
with $\tau_{A_t}$ being the treatment effect of the augmented treatment $A_t$.  The treatment effects of metformin, sulfonylurea, glitazone$^*$ and  insulin are 0.14, 0.20, 0.02, and 0.14, respectively. Note that the treatment effects are reported as a percentage reduction in $A1c$. The treatment effect of glitazone is listed as 0.12 in \cite{timbie2010diminishing}, which is similar to metformin. However, in order to study the effect of the treatment discontinuation on the optimal regime, we set its treatment effect to 0.02 and, from now on, denote it by glitazone$^*$.


\item {\it{Time-varying variables at time $t$:}} $BP_{t}=(BP_{t-1}+\epsilon)/(\sqrt {1+\sigma^2_{\epsilon}})$  and  $Weight_{t}=(Weight_{t-1}+\epsilon)/(\sqrt {1+\sigma^2_{\epsilon}})$.

\item {\it{Death indicator at time $t$:}} A binary variable $C_t$ is generated from a Bernoulli distribution with probability $\frac{\exp\{-10+0.08 I(A1c_{t-1}>7) A1c_{t-1}^2+0.5NAT_{t-1}\}}{1+\exp\{-10+0.08 I(A1c_{t-1}>7) A1c_{t-1}^2+0.5NAT_{t-1}\}}$. $C_t=1$ is the indicator of death.

\item {\it{ Reward function at time $t$:}} In order to be able to find an optimal treatment regime, we need an operational definition of controlled $A1c$. Hence, we define the following reward function at time $t$ as a function of $A1c_t$, $D_t$ and $C_t$,
\begin{itemize}
\item $R_{t}=1$ if $A1c_t<7$, -2 if $7<A1c_t \& D_t=1$, -10 if $C_t=1$ and zero otherwise.
\end{itemize}
This reward structure helps us to identify treatments whose discontinuation rate outweighs their efficacy while reducing the chance of death. 

\end{itemize} 

Note that the state space at time $t$  includes $S_t=(NAT_t,D_t,A1c_t, BP_{t},Weight_{t})$. However, the Markov property holds with $(NAT_t,D_t,A1c_t)$, and variables $BP$ and $Weight$ are noise variables. In order to satisfy  assumption $A.2$, we ignored the first four time points in the observed trajectory of each patient.  

\subsection{Analysis of a simulated dataset}

We generate two datasets of sizes 2,000 and 5,000 and compare the quality of the estimated optimal treatment policy using the proposed $GGQ$ and the {\it {classical}} approach. The latter, also known as action-value iteration method,  turns the recurrence relation of (\ref{eq:optBell}) into an update rule as
\begin{align*}
Q_{k+1}^*(s,a) &= \E_{} \left[ r(s,a,S') + \gamma  \max_{a'} Q_{k}^* (S',a') | S=s,A=a  \right] \\
            &= \sum_{s'} P_{S'|S,A}(s'|s,a) [r(s,a,S')+\gamma \max_{a'} Q_{k}^* (sÕ,aÕ)],
\end{align*}
where $r()$ is the reward function. This procedure can be summarized as 
\begin{enumerate}
\item set $Q_{1}^*(s,a)=0$ for all $(s,a) \in (\mathcal{S},\mathcal{A}_s)$
\item for each $(s,a) \in (\mathcal{S},\mathcal{A}_s)$, $q \leftarrow Q_{k}^*(s,a)  $
\item $Q_{k+1}^*(s,a) \leftarrow \sum_{s'} P_{S'|S,A}(s'|s,a) [r(s,a,S')+\gamma \max_{a'} Q_{k}^* (s',a')]$
\item repeat 2 and 3 until $\max_{ \forall s,a} |Q_{k+1}^*(s,a) -q |<\epsilon$ where $\epsilon$ is a small positive value
\item for each $s$, $\hat \pi(s)=\arg \max_s Q^*_{k+1}(s,a)$.
\end{enumerate}
The above 5-step procedure is similar to the one presented in Chapter 4 of \cite{sutton1998reinforcement}. Note that the classical approach requires estimation of the transition probabilities $P_{S'|S,A}$, which limits its usage to  cases where state and action space is small. We categorize the continuous variables $(BP,Weight,A1c)$ and estimate $P_{S'|S,A}$ nonparametrically. The variables $(BP,Weight)$ are categorized based on the percentiles $(30,80)$ and denoted as $(Cat.BP,Cat.Weight)$. The categorized $A1c$ $(Cat.A1c)$ is formed by breaking the variable $A1c$ on $(-\infty,7,7.2,7.5,7.7,8,9,+\infty)$. Hence the state variable used in the classical approach is $S^{Class}_t= (NAT_t,D_t,Cat.BP,Cat.Weight, Cat.A1c)$. Note that $Cat.A1c \in \{2,3,4,5\}$ corresponded to $7<A1c<8$.

\begin{figure}[t]
 \centering
  \makebox{\includegraphics[ scale=0.55]{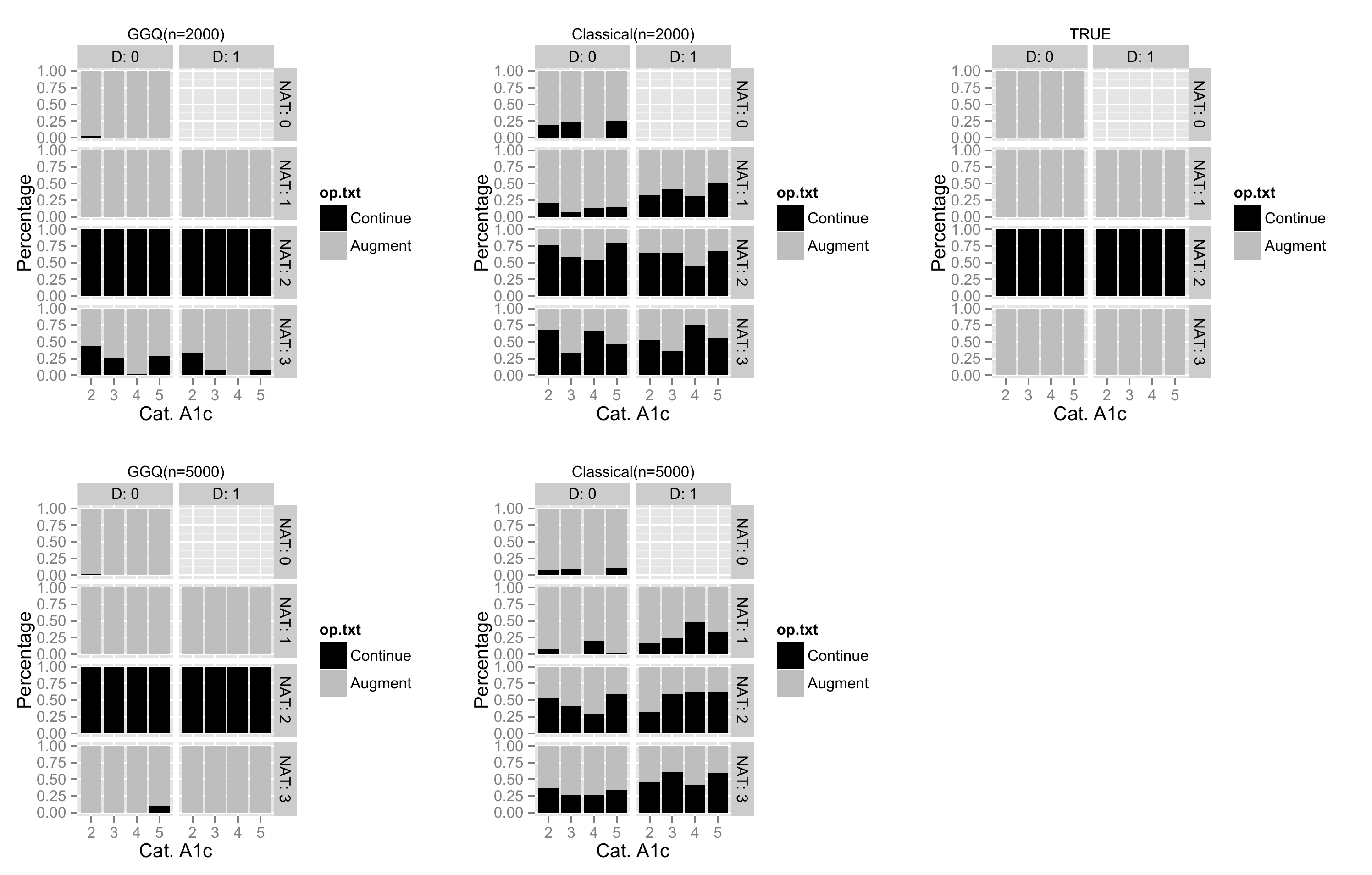}}
 \caption{ Simulation: Estimated optimal treatment (op.txt) for different states. The shaded bar represents the evidence in the simulated data for each of the treatment choices as labeled in the {{legend}}. The upper and lower horizontal axes are the discontinuation indicator and the categorized A1c (Cat.A1c), respectively.  The {{vertical}} axes on the {{right}} and {{left}} hand side give  $NAT$ and the percentage of time that the treatment choices are selected as the optimal choice.}
 \label{fig:optxtsim}
\end{figure}

Unlike the  classical approach, the optimal treatment policy using our proposed GGQ method utilizes the continuous state variable $S_t=(NAT_t,D_t,A1c_t, BP_t,Weight_t)$. In our example, we parametrize the optimal action value function $Q^*(s,a)$ using a 72-dimensional vector of parameters and construct the features $\varphi(s,a)$ using  {{radial basis}} functions (Gaussian kernels). See Appendix 2 for more details. To specify the step sizes of the stochastic minimization algorithm, first we select two functions that satisfy the conditions $P.1-4$ listed in Appendix 1 and multiply them by $v \in(0,1)$. Then we run the algorithm for different values of $v$ and select the one that minimizes the objective function. In this simulation study we set the step sizes $\alpha_k=\nu/(k\log(k))$ and $\beta_k=\nu/k$ where $\nu$ is set to 0.05. Section S5 in the supplementary materials discusses the effect of the choice of tuning parameters on the estimated optimal regime.



\begin{figure}[t]
 \centering
  \makebox{\includegraphics[ scale=0.50]{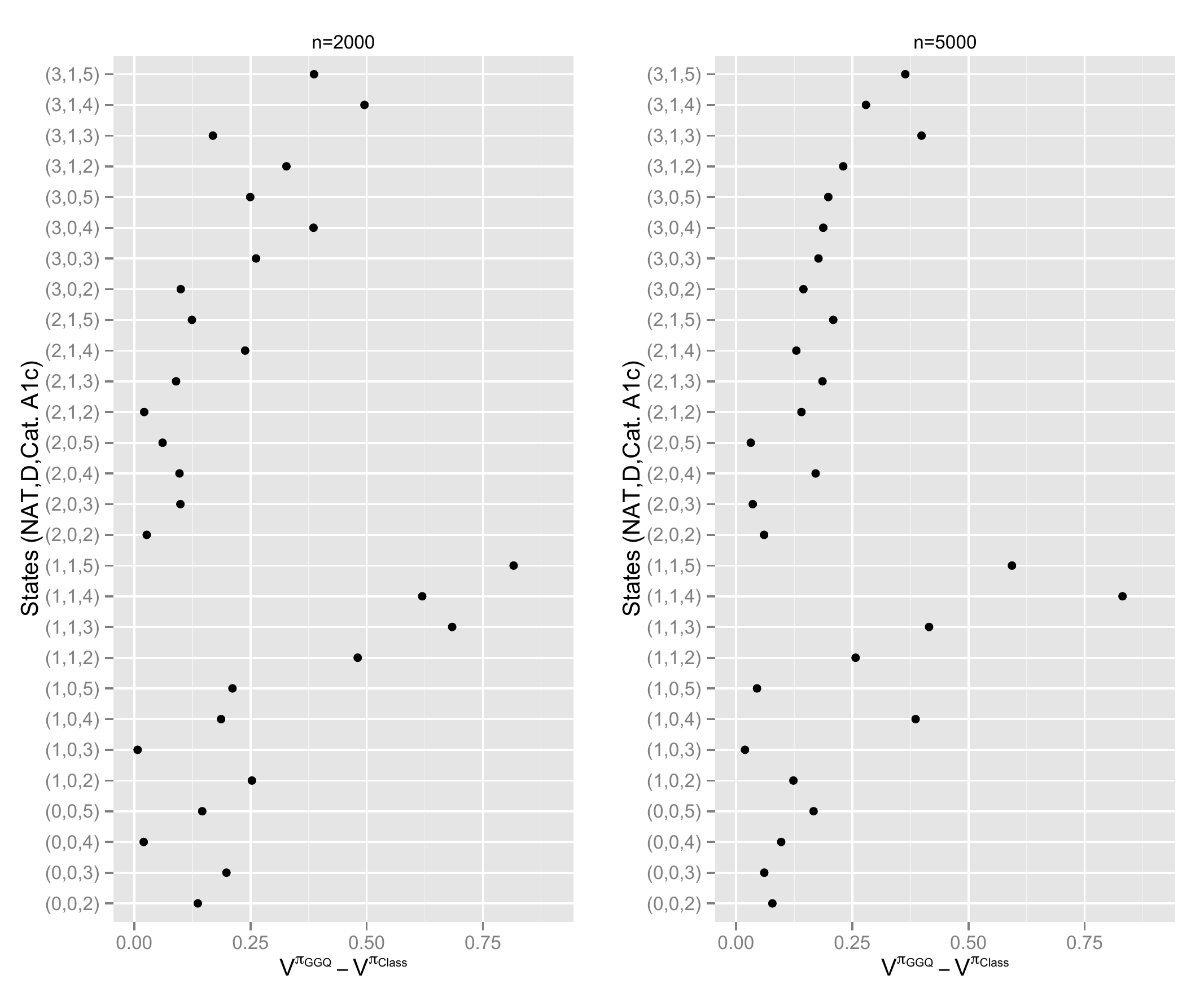}}
 \caption{ Simulation: Monte Carlo approximation of the difference between value functions. $V^{\pi_{GGQ}}$ and $V^{\pi_{Class}}$ are the value functions corresponding   to the {{ classical}} and GGQ approaches. The {{vertical}} axis represents the triplets of states in the order of $(NAT,D,Cat. A1c)$. }
 \label{fig:valsim}
\end{figure}

\textbf{True optimal policy.} As the sample size increases, the optimal action-value function estimated using the classical approach  converges to the {{true}} optimal action-value function. Hence, for the purpose of finding the true optimal policy,  we generate a  large dataset of size 500,000 and estimate transition probabilities $P_{S'|S,A}$ using a nonparametric approach, where $S$ is the oracle state $(NAT_t,D_t,Cat.A1c_t)$. Then by the 5-step procedure (classical approach), the true optimal policy is approximated and set  as our benchmark.

Figure \ref{fig:optxtsim} depicts the true and estimated optimal treatment for each discretized oracle state $(NAT_t,D_t,Cat.A1c_t)$ using the GGQ and classical approaches. As in this example, we set the discount factor $\gamma$ to 0.6. Note that the estimated optimal policy using classical and GGQ methods is based on the states $S^{Class}_t$ and $S_t$, respectively. However, in Figure 1, we averaged it over the noise variables $(BP,Weight)$ and, for comparability, we report the results  on the discretized oracle state.  The vertical axis on the left hand side is the percentage of time that the treatment choices are selected as optimal.  The left vertical and both horizontal  axes represent the elements of the state $(NAT, D,Cat.A1c)$, respectively. This plot shows that the proposed GGQ method outperforms the classical approach for moderate sample sizes.


\begin{figure}[t]
 \centering
  \makebox{\includegraphics[ scale=0.45]{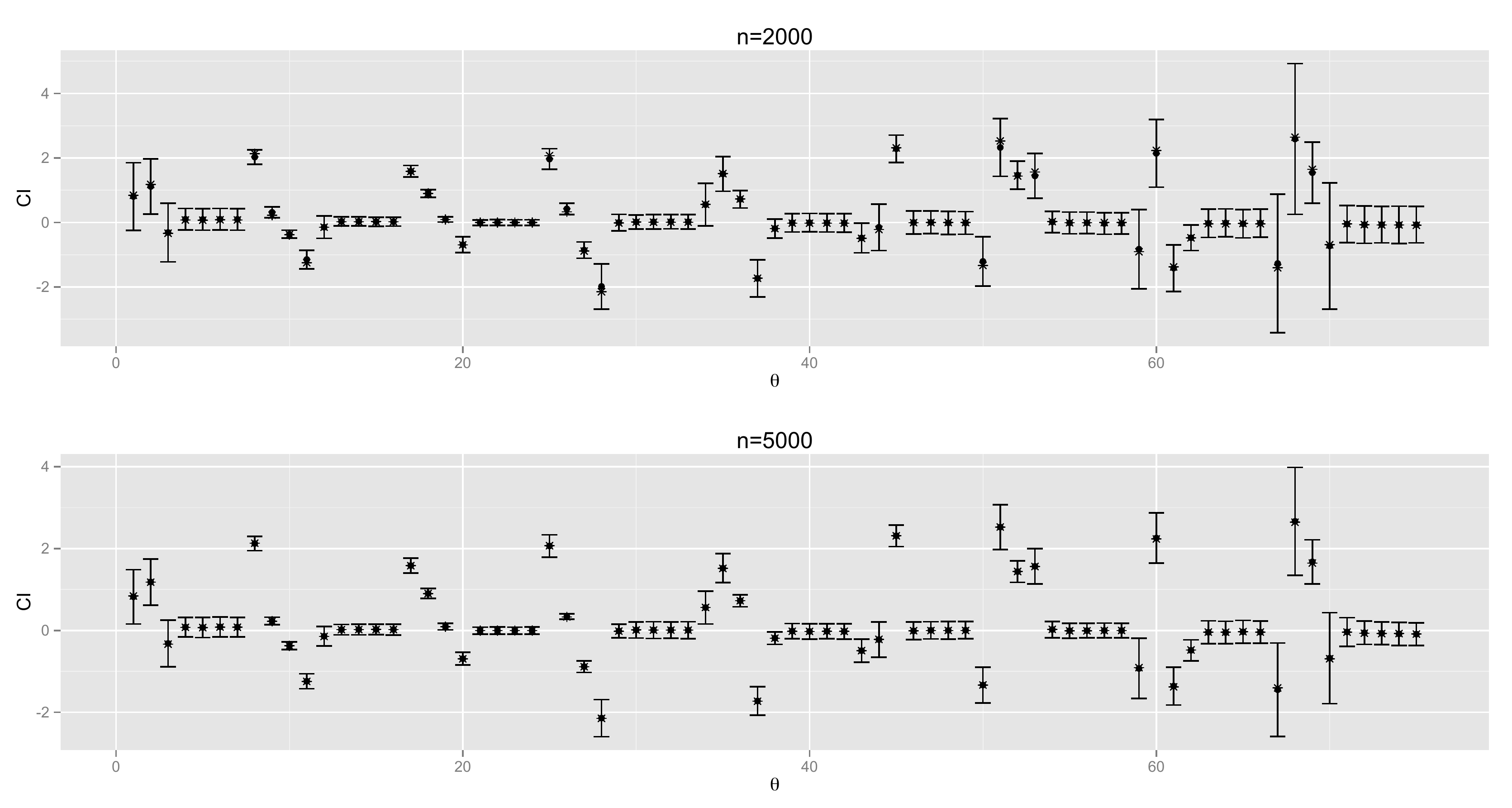}}
 \caption{ Simulation: Plot of the 95\% confidence intervals of $\theta$.  }
 \label{fig:diffsim}
\end{figure}

More specifically, the estimated optimal policy using GGQ ($\hat \pi_{GGQ}$) recommends  not augmenting the third ({\it glitazone$^*$}) and fourth ({\it{insulin}}) treatments (the left plots).  This makes sense since {\it glitazone$^*$} has a small treatment effect (0.02) and {\it{insulin}} has high discontinuation rate (0.35) that outweighs its efficacy.  However, the estimated optimal policy using a classical approach ($\hat \pi_{Class}$) recommends  augmenting these treatments by some positive probabilities. Specifically, $\hat \pi_{Class}$ augments  {\it{insulin}} about 50\% of the time when $D=1$ and $Cat.A1c\in\{2,3,4\}$.

Figure \ref{fig:valsim} presents  the difference between the values of the estimated optimal policies $\hat \pi_{GGQ}$ and $\hat \pi_{Class}$.  The values are calculated using the Monte Carlo method, where the value of a treatment policy $\pi$, for each state $s$, is defined as $V^{\pi}(s)=\E_{\pi} \left[\sum_{k=1}^{\infty} \gamma^{k-1} R_{t+k}| S_{t}=s,A_{t}=\pi(s)\right]$.  For both sample sizes and all of the states, the value of $\hat \pi_{GGQ}$  is higher than the value of $\hat \pi_{Class}$. This indicates that the estimated optimal policy $\hat \pi_{GGQ}$ has better quality. 

Our simulation result is consistent with \cite{timbie2010diminishing}, and suggests that we should not always augment the treatment when $7<A1c<8$. In other words, depending on the treatment already taken, sometimes we should consider not augmenting the treatment to avoid side effects.

\subsection{Monte Carlo studies}
We generate 500 datasets each of sizes 2,000 and 5,000 to examine the asymptotic behavior of the proposed method. 
Figure \ref{fig:diffsim} shows the confidence intervals of $\theta$, where the standard errors are estimated using the variance formula presented in Theorem \ref{th:asymp}. The dark circles are the average of the 500 $\hat \theta$s and asterisks are the true parameter values approximated using a Monte Carlo study with $n=10,000$.  These confidence intervals may be used to identify the parts of  the feature function that should be kept in the decision rule and may be used as a feature selection tool.  Another important use of the asymptotic results in Theorem 1 is to investigate whether there is a significant difference between treatment options.  Specifically, one may build the confidence interval for the difference between the estimated optimal action-value function for different treatment options (augment vs. continue) and check whether it contains zero (while adjusting for Type-I error rate for more than two treatment options).  In Figure \ref{fig:diffValCI}, we  constructed and evaluated the quality of these 95\% confidence intervals  for $ Q^*(s,augment)- Q^*(s,0)$,  given each  state variable for $n=2000$. The results for $n=5000$ is similar and omitted due to  space limitations.  These results confirm the accuracy of our  variance estimator in Theorem 1.

\begin{figure}[t]
 \centering
  \makebox{\includegraphics[ scale=0.4]{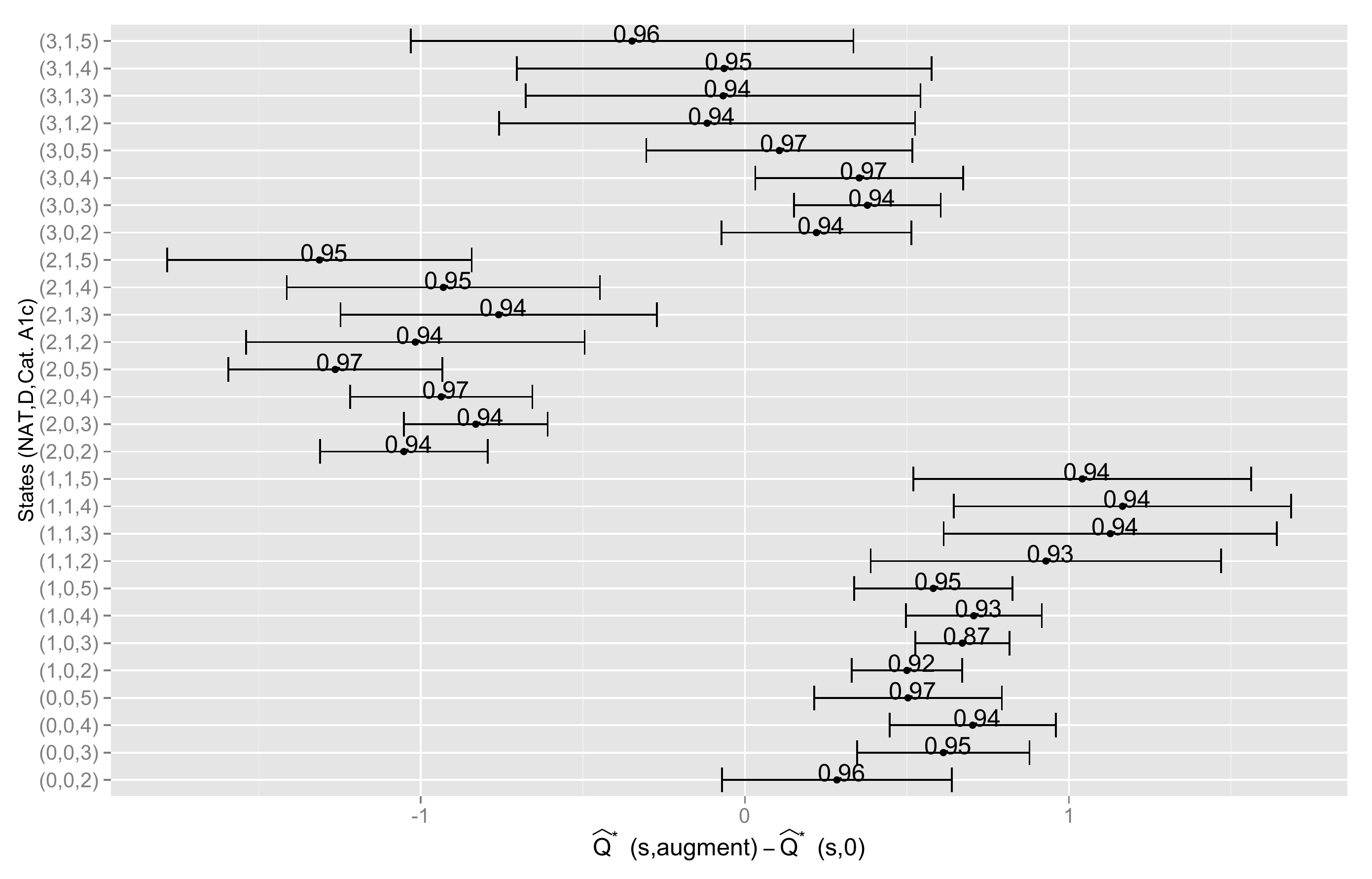}}
 \caption{ Simulation: The confidence intervals of the difference between the estimated optimal action-value function when the treatment is augmented and continued (i.e., $\hat Q^*(s,augment)-\hat Q^*(s,0)$). The number on each confidence interval represents the coverage of that interval. The {{vertical}} axis represents the triplets of states in the order of $(NAT,D,Cat. A1c)$.  }
 \label{fig:diffValCI}
\end{figure}

\section{Discussion}

We have proposed a new method that can be used to form optimal dynamic treatment regimes in infinite-horizon settings (i.e., there is no a priori fixed end of follow up), while our data were collected over a fixed period of time with many decision points.    We have assumed that  the value of the optimal regime can be presented using a linear function of parameters,  and we developed an estimating procedure based on temporal difference residuals to estimate the parameters of this function.  We  developed the asymptotic properties of the estimated parameters and evaluated the proposed method using simulation studies.

This work raises a number of  interesting issues. We have derived the asymptotic distribution of the estimators under some assumptions. One important practical problem is to provide a valid inference when the optimal treatment is {\it{not}} unique for some states, (i.e., assumption $A.8$ is violated). This may lead to non-regular estimators and inflate the Type-I error rate \citep{MR1245941}. Among others, \cite{robins2004optimal}  and \cite{laber2010statistical} proposed solutions to this issue. However, the existing methods may not be directly applied  to our method  and require major modifications. The second issue is how to construct the feature functions. In this manuscript, we used the {\it{radial basis}} functions \citep{moody1989fast, poggio1990networks}. One simple method is to try different feature functions ($\varphi$) and select the one that minimizes the function $f(\varphi) =\min_{\theta}M(\theta)$ \citep{parr2008analysis}. {\color{black}Alternatively, one may use {{support vector regression}} to approximate the action-value function  \citep{ vapnik1997support,tang2012developing}}. 

The proposed method can be used in settings where the time between decision points is fixed, say 3 months.  This assumption often  holds (approximately) for some chronic diseases such as diabetes, cyclic fibrosis and asthma. It would, however,  be of interest to extend the method to cases with a random decision point (clinic visits).  Usually, the random time between decision points happens   either when doctors decide to schedule the next visit sooner or later than the prespecified time or when patients request an appointment due to, for example, side effects or acute symptoms. The former is easier to deal with because we have the covariates required to model the visit process. The latter, however, is more difficult and results in non-ignorable missing data because we do not have information about those patients who did not show up. \cite{robins2008estimation} discusses the issue of the random visit process in detail.

\section*{Acknowledgement}
Acknowledgements should appear after the body of the paper but before any appendices and be as brief as possible
subject to politeness. Information, such as contract numbers, of no interest to readers, must
be excluded.

\section*{Supplementary material}
\label{SM}
Supplementary material available at \Bka\  online includes the stochastic minimization algorithm and proof of Theorem 1. It also discusses the effect of the discount factor and tuning parameters of the stochastic minimization algorithm on the estimated optimal treatment regime.

\appendix

\appendixone
\section*{Appendix 1: Tuning Parameters}
\label{app:tun}
The tuning parameters $\alpha_k$ and $\beta_k$ in the GGQ algorithm need to satisfy the following assumptions   \citep{maei2010toward}:
\begin{enumerate}
\item[{\it{P.1}}]  $\alpha_k$, $\beta_k$ $\forall k$ and are deterministic.
\item[{\it{P.2}}] $\sum_{k=0}^{\infty} \alpha_k = \sum_{k=0}^{\infty} \beta_k = \infty$.
\item[{\it{P.3}}] $\sum_{k=0}^{\infty} (\alpha_k^2+\beta_k^2)<\infty$.
\item[{\it{P.4}}] $\alpha_k/\beta_k \rightarrow 0$.
\end{enumerate}

\appendixtwo
\section*{Appendix 2: feature functions}
\label{app:feat}
The feature functions are constructed using the radial basis functions and
\[
\varphi(s,a)=I(s\neq \emptyset)(\varphi_1(s,a),\varphi_2(s,a),\varphi_3(s,a),\varphi_4(s,a),\varphi_5(s,a),\varphi_6(s,a),\varphi_7(s,a),\varphi_8(s,a),\varphi_9(s,a)),
\]
where 
{\small{
\begin{align*}
\varphi_1(s,a)&=I(A=0,NAT=0)(1,\exp[-h(A1c-q_{11})^2],\exp[-h(A1c-q_{12})^2], \phi(BP),\phi(Weight)) \\
\varphi_2(s,a)&=I(A=0,NAT=1)(1,\exp[-h(A1c-q_{21})^2],\exp[-h(A1c-q_{22})^2],\exp[-h(A1c-8.0)^2], \\  &\hspace{4in}d, \phi(BP),\phi(Weight)) \\
\varphi_3(s,a)&=I(A=0,NAT=2)(1,\exp[-h(A1c-q_{31})^2],\exp[-h(A1c-q_{33})^2], d,\phi(BP),\phi(Weight)) \\
\varphi_4(s,a)&=I(A=0,NAT=3)(1,\exp[-h(A1c-q_{41})^2],\exp[-h(A1c-q_{42})^2,\exp[-h(A1c-8.5)^2],\\ &
\hspace{4in} d, \phi(BP),\phi(Weight)) \\
\varphi_5(s,a)&=I(A=0,NAT=4)(1,\exp[-h(A1c-q_{51})^2],\exp[-h(A1c-q_{52})^2,\exp[-h(A1c-8.0)^2],\\ & 
\hspace{4in} d, \phi(BP),\phi(Weight)) \\
\varphi_6(s,a)&=I(A=1,NAT=0)(1,\exp[-h(A1c-6.5)^2],\exp[-h(A1c-7.5)^2], \phi(BP),\phi(Weight)) \\
\varphi_7(s,a)&=I(A=2,NAT=1)(1,\exp[-h(A1c-6.5)^2],\exp[-h(A1c-q_{71})^2],\exp[-h(A1c-q_{73})^2]\\ &\hspace{4in}d, \phi(BP),\phi(Weight)) \\
\varphi_8(s,a)&=I(A=3,NAT=2)(1,\exp[-h(A1c-q_{82})^2],\exp[-h(A1c-8.5)^2],d, \phi(BP),\phi(Weight))\\
\varphi_9(s,a)&=I(A=4,NAT=3)(1,\exp[-h(A1c-6.5)^2],\exp[-h(A1c-q_{92})^2,\exp[-h(A1c-8.5)^2], \\ &
\hspace{4in} d, \phi(BP),\phi(Weight)), 
\end{align*}
where $\phi(BP)= (\exp[-h(BP-q_{b1})^2],\exp[-h(BP-q_{b3})^2])$ and $\phi(Weight)= (\exp[-h(Weight-q_{w1})^2],\exp[-h(Weight-q_{w3})^2]).$
}} $h$ is a positive constant and  $q_{.j}$ is the observed $jth$ quantile of the corresponding variable. For example, $q_{11}$ and $q_{12}$ are the first and second quantiles of $A1c$ given $A=0$ and $NAT=0$. Similarly,  $q_{b1}$ and $q_{b3}$ are the first and third quantiles of $BP$. Note that, in our generative model, $BP$ and $Weight$ are independent of $A$ and $NAT$.

\textbf{Remark 1.} Number of quantiles used in each $\varphi_k$ and $\phi(.)$ is a bias-variance trade-off such that increasing the number of quantiles decreases the bias but increases the variance of the estimated parameters. Similarly, decreasing the value of $h$ may decrease the bias but increase the variance of the estimators. In our simulation, we set $h=0.5$. 

\appendixthree
\section*{ Appendix 3: Assumptions}
\label{app:proofs}

In addition to assumptions {\it{A.1-3}}, the following assumptions  are required for large sample properties of our estimator.
\begin{itemize}
\item[] {\it{A.4}} $\theta_0^\top \varphi(.,.)$ is the optimal Q-function.
\item[] {\it{A.5}} $\E \left[\sum_{t=0}^{T-1} \|\varphi(S_t,A_t)\|_2 \|\varphi(S_t,a)\|_2 \right] <\infty$, for any $a \in \mathcal{A}$.
\item[] {\it{A.6}} The matrix $W$ is of full rank.
\item[] {\it{A.7}} $\E\left[\sum_{t=0}^{T-1} \left\{ \gamma I_{ |\pi^*(S_{t+1})|=1} \varphi(S_{t+1},\pi^*(S_{t+1}))  -  \varphi(S_{t},A_t) \right\} \varphi(S_t,A_t)^\top \right]$ is of full rank where $ \pi^*(S_{t+1}) = \arg \max_{a} \theta_0^\top \varphi(S_{t+1},a)$ and $|.|$ is the cardinal of a set.
\item[] {\it{A.8}} The optimal treatment is unique at each decision point.
\end{itemize}


\newpage
\begin{center} \textbf{Web-based Supplementary Materials for \\ ``Constructing Dynamic Treatment Regimes in Infinite-Horizon Settings''} \end{center}


\section*{S1. The stochastic minimization algorithm}
\label{Supp:ggqalg}
We base our minimization procedure on the (approximate) gradient descent approach, in which  sub-gradients are defined as Frechet sub-gradients of the objective function $M(\theta)$. Following  \cite{maei2010toward}  and under under assumptions A.1-8, the algorithm converges to the minimizer of our objective function.

The sub-gradient $\partial M(\theta)$ of $M(\theta)$ with respect to $\theta$ is
\[
\partial M(\theta)=-\E \left[\sum_t \delta_{t+1}(\theta) \varphi(s_t,a_t)^\top \right]+\gamma \E \left[\sum_t \varphi(s_{t+1}, \pi^*(s_{t+1})) \varphi(s_t,a_t)^\top \right] \varpi,
\]
where $\varpi=\E[\sum_t  \varphi(s_t,a_t) \varphi(s_t,a_t)^\top]^{-1} \E[\sum_t  \delta_{t+1}(\theta) \varphi(s_t,a_t)^\top ]$. Using  the {\it{weight-doubling trick}} introduced by \cite{sutton2009convergentb}, we summarize the steps toward minimizing the objective function $M(\theta)$ as follows:

\begin{itemize}
\item[1.] Set initial values for the $p$ dimensional vectors of $\theta$ and $\varpi$. Using grid search, the initial value $\theta_1$ can be set as the one that minimizes the objective function and the initial value $w_1=\P_n[\sum_t  \varphi(s_t,a_t) \varphi(s_t,a_t)^\top]^{-1} \P_n[\sum_t  \delta_{t+1}(\theta_0) \varphi(s_t,a_t)^\top ]$.
\item[2.] Start from the first individual's trajectory and obtain  $\theta_{k+1}$ from the following iterative equations:
\begin{align}
\theta_{k+1} &= \theta_k + \alpha_k\nu  \sum_t \left[ \delta_{t+1}(\theta_k)\varphi(s_t,a_t) - \gamma \{ \varpi_k^\top \varphi(s_t,a_t)\}  \varphi(s_{t+1},\pi^*_{\theta_k}(s_{t+1}))^\top  \right]  \label{eq:upd1} \\
\varpi_{k+1} &= \varpi_k+ \beta_k\nu \sum_t \left[ \delta_{t+1}(\theta_k)  - \{\varphi(s_t,a_t)^\top \varpi_t\}^\top \right]  \varphi(s_t,a_t)^\top,   \label{eq:upd2}
\end{align}
where $\alpha_k$, $\beta_k$ and $\nu$ are tuning parameters (step sizes) and $\pi^*_{\theta_k}(.)$ is the optimal policy estimated as a function of $\theta_k$.
\item[3.] Use step 2 to continue updating the parameters to the last individual.
\item[4.] Continue steps 2 and 3 until $\| \theta_{k+1} - \theta_{k}\|_2 <c$ where $c$ is a constant.
\end{itemize}
The tuning parameters (step sizes) $\alpha_k$ and $\beta_k$ need to satisfy  assumptions {\it{P.1-P.4}}  in Appendix A. The parameter $\nu$ tunes the step sizes and lies in the interval (0,1). Our simulation studies show that the best choice of $\nu$ would be close to $\frac{1}{T}$, where $T$ is the maximal length of the trajectories in our data. See our discussion in Section S5. 

\section*{S2. Lemma}
\begin{lemma}
Let $\{a_i\}_1^K$ and $\{b_i\}_1^K$ be two sets of elements, then
\begin{enumerate}
\item[I.] $
\lim_{\|b\| \rightarrow 0} \frac{\max_{ 1\leq i \leq K}[a_i+b_i] - \max_{i \in \pi^*}[a_i+b_i]}{\|b\|} = 0$;
\item[II.] $\max_{ 1\leq i \leq K}[a_i+b_i]-  \max_{i \in \pi^*}[a_i+b_i]$ is non-negative and bounded above by
\[
\max_{i \notin \pi^* } b_i  - \max_{i \in \pi^* } b_i \leq \max_{1\leq i \leq K } b_i  - \max_{i \in \pi^* } b_i 
\]
where $\pi^* =  \arg \max_{ 1\leq i \leq K} a_i$.
\end{enumerate}
\label{lem:max}
\end{lemma}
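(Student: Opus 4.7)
The plan is to handle both parts by a single case analysis based on where the maximizer of $\{a_i+b_i\}_{i=1}^K$ sits relative to $\pi^*$. Write $a^*:=\max_i a_i$, so that $a_i=a^*$ for $i\in\pi^*$ and $a_i<a^*$ for $i\notin\pi^*$, and set $\Delta:=a^*-\max_{i\notin\pi^*}a_i>0$; the degenerate case $\pi^*=\{1,\dots,K\}$ is trivial and handled separately. Pick any $j^*\in\arg\max_i[a_i+b_i]$ and record the identity $\max_{i\in\pi^*}[a_i+b_i]=a^*+\max_{i\in\pi^*}b_i$, which holds because $a_i$ is constant on $\pi^*$.

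For Part II, non-negativity is immediate since maximizing over a superset can only increase the value. For the upper bound, if $j^*\in\pi^*$ the numerator equals zero and the bound is trivial. If $j^*\notin\pi^*$, using the identity above I would write
\begin{align*}
\max_i[a_i+b_i]-\max_{i\in\pi^*}[a_i+b_i] &= (a_{j^*}-a^*)+\bigl(b_{j^*}-\max_{i\in\pi^*}b_i\bigr)\\
&\leq b_{j^*}-\max_{i\in\pi^*}b_i\\
&\leq \max_{i\notin\pi^*}b_i-\max_{i\in\pi^*}b_i,
\end{align*}
where the first inequality drops the non-positive quantity $a_{j^*}-a^*<0$ and the second uses $j^*\notin\pi^*$. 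The remaining inequality in the lemma statement, $\max_{i\notin\pi^*}b_i\leq\max_{1\leq i\leq K}b_i$, is immediate from maximizing over a superset.

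For Part I, I would prove the stronger statement that the numerator is identically zero once $\|b\|$ is small enough, so the quotient vanishes on a neighborhood of the origin and the limit is trivially zero. By equivalence of norms on $\mathbb{R}^K$, choose a constant $C$ with $|b_i|\leq C\|b\|$ for every $i$. Then whenever $\|b\|<\Delta/(2C)$, for each $i\notin\pi^*$ and $j\in\pi^*$ we have
\[
a_i+b_i\leq(a^*-\Delta)+C\|b\|<a^*-C\|b\|\leq a_j+b_j,
\]
so every maximizer of $a_i+b_i$ must lie in $\pi^*$, forcing $\max_i[a_i+b_i]=\max_{i\in\pi^*}[a_i+b_i]$ and the limit follows.

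The main obstacle: honestly no step is difficult, since both parts reduce to a short case split exploiting $a_{j^*}<a^*$ for $j^*\notin\pi^*$. The only subtlety worth flagging is the degenerate case $\pi^*=\{1,\dots,K\}$, in which $\Delta$ is undefined and the first term on the right in Part II is a max over an empty set; this is dispatched with the convention $\max_\emptyset=-\infty$ and by noting that both the numerator in Part II and the quotient in Part I are identically zero for all $b$.
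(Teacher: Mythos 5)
Your proof is correct, and for Part I it takes a genuinely different (and in fact stronger) route than the paper. The paper bounds the quotient by $\max\bigl[0,\ \max_{i\notin\pi^*}\{(a_i-a_{i^*})/\|b\| + (b_i-\max_{i\in\pi^*}b_i)/\|b\|\}\bigr]$, observes that the first summand tends to $-\infty$ while the second is bounded, and concludes the limit is zero; you instead exploit the same positive gap $\Delta = a^*-\max_{i\notin\pi^*}a_i$ to show the \emph{numerator itself vanishes identically} once $\|b\|<\Delta/(2C)$, which gives the limit trivially and is a cleaner statement of why the result holds. For Part II the paper only says it ``can be proved similarly,'' so your explicit case split on whether the global maximizer $j^*$ lies in $\pi^*$ is a welcome addition. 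One caveat you should not gloss over: in the case $j^*\in\pi^*$ you call the upper bound ``trivial,'' but the intermediate bound $\max_{i\notin\pi^*}b_i-\max_{i\in\pi^*}b_i$ in the lemma can be strictly negative there (e.g.\ $K=2$, $a=(1,0)$, $b=(0,-10)$ gives a difference of $0$ but an alleged bound of $-10$), so only the final, always nonnegative bound $\max_{1\leq i\leq K}b_i-\max_{i\in\pi^*}b_i$ actually holds in that case. This is really an imprecision in the lemma as stated (the paper's own decomposition yields $\max[0,\cdot]$ of that quantity), and it is the weaker bound that is used downstream in the proof of Theorem 1, so your argument supports everything the lemma is needed for; you should just state explicitly that in the $j^*\in\pi^*$ case you are verifying the rightmost bound rather than the intermediate one.
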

\begin{proof}
Part I. Since the set $\pi^*$ is a subset of $1\leq i \leq K$, we have
\begin{align*}
0 &\leq \frac{\max_{ 1\leq i \leq K}[a_i+b_i] - \max_{i \in \pi^*}[a_i+b_i]}{\|b\|}\\
   &= \max_{ 1\leq i \leq K}  \left[ a_i - a_{i^*} + b_i - \max_{i \in \pi^*} b_i \right] \frac{1}{\|b\|}, \hspace{.5in} \text{ $\forall i^* \in \pi^*$ }\\
   &= \max \left[ \max_{i \in \pi^* } \{ b_i - \max_{i \in \pi^* } b_i\}, \max_{i \notin \pi^* } \{a_i - a_{i^*} + b_i - \max_{i \in \pi^*} b_i\} \right] \frac{1}{\|b\|} \\
   &=\max \left[ 0, \max_{i \notin \pi^* }  \left\{ \frac{a_i - a_{i^*}}{\|b\|} + \frac{b_i - \max_{i \in \pi^*} b_i}{\|b\|} \right\} \right]  \\
   &\leq \max \left[0,  \max_{i \notin \pi^* }  \left\{ \frac{a_i - a_{i^*}}{\|b\|} + \frac{ 2| \max_{1\leq i \leq K}b_i | }{\|b\|} \right\} \right]
\end{align*}
Since  $\frac{a_i - a_{i^*}}{\|b\|} \rightarrow -\infty $ as $\|b\| \rightarrow 0$  and $\frac{ | \max_{i \in \pi}b_i | }{\|b\|}  \leq 1$, part I is proved.  Part II can be proved similarly. \qed
\end{proof}

\section*{S3. Proof of Theorem 1}

We first show that the objective function $M(\theta)$ is continuous. Then using the results of Lemma \ref{lem:max}, we show that $M(\theta)$ satisfies the two required conditions of Theorem 3.2.1 in \cite{van1996weak}, which completes the proof of consistency.  To prove the asymptotic normality, under the additional assumption $A.8$, we define  a function $V(b)$ such that  $\sqrt n (\hat \theta - \theta_0) \rightarrow_d \arg \min_b  V(b)  $, where $\arg \min_b V(b)$ is normally distributed. 

First, we show that the function $M(\theta)$ is continuous by proving the continuity of $D(\theta)$ around $\theta=\theta_0$. Since
\begin{align*}
 \|D(\theta) - D(\theta_0)\| &=  \|\sum_{t=0}^{T-1}   \E \left[ \{\max_a [\theta^\top  \varphi(S_{t+1},a)]-\max_a [\theta_0^\top  \varphi(S_{t+1},a)] \right. \\
                & \hspace{2in} \left. -\theta^\top \varphi(S_t,A_t)+\theta_0^\top \varphi(S_t,A_t)\} \varphi(S_t,A_t)^\top\right] \|,
 \end{align*} 
by the Cauchy-Schwartz inequality and the fact that $|\max_a f(a) - \max_a g(a)| \leq \max_a |f(a)-g(a)|$, we have
\[
\E \left[ \max_a [\theta^\top  \varphi(S_{t+1},a)]-\max_a [\theta_0^\top  \varphi(S_{t+1},a)] \right] \leq ||\theta-\theta_0||  \E \left[ \sum_a \| \varphi(S_{t+1},a)\| \right],
\]
which under assumption $A.4$ implies the continuity of $D(\theta)$ around $\theta=\theta_0$.

\textbf{Part I (Consistency).} We show that  $M(\theta)$ satisfies the two conditions listed in Theorem 3.2.1 Van Der Vaart and Wellner (1996). For the \textbf{first} condition, we need to show that for some $\epsilon>0$ and $c>0$ with $||\theta-\theta_0 ||<\epsilon$,
\[
D(\theta_0)  W^{-1} D(\theta_0)^\top -D(\theta)  W^{-1} D(\theta)^\top \leq -c ||\theta-\theta_0 ||^2,
\]
and since $D(\theta_0)=0$,
\begin{align}
D(\theta)  W^{-1} D(\theta)^\top \geq c ||\theta-\theta_0 ||^2. \label{eq:con1}
\end{align}
The left hand side of the above inequality can be written as
\begin{align*}
D(\theta)  W^{-1} D(\theta)^\top &= [D(\theta)-D(\theta_0) -\dot D_{\theta_0}(\theta-\theta_0) ]  W^{-1} [D(\theta)-D(\theta_0) -\dot D_{\theta_0}(\theta-\theta_0) ]^\top     \\
                    &+ \dot D_{\theta_0}(\theta-\theta_0)  W^{-1} \dot D_{\theta_0}(\theta-\theta_0)^\top + 2  \dot D_{\theta_0}(\theta-\theta_0)  W^{-1} [D(\theta)-D(\theta_0) -\dot D_{\theta_0}(\theta-\theta_0) ] ^\top, 
\end{align*} 
where
\[
\dot D_{\theta_0}(b)=\E\left[\sum_{t=0}^{T-1} \{ \gamma \max_{a \in \pi^*(S_{t+1})}b^\top  \varphi(S_{t+1},a)-b^\top \varphi(S_t,A_t)\} \varphi(S_t,A_t)^\top\right],
\]
and $ \pi^*(S_{t+1}) = \arg \max_{a} \theta_0^\top \varphi(S_{t+1},a)$.
Note that $\pi^*(S_{t+1})$ may be a set of actions. Now, we show that $\|D(\theta)-D(\theta_0) -\dot D_{\theta_0}(\theta-\theta_0) \| =o(\| \theta-\theta_0\|)$. 

\begin{align*}
&\frac{\|D(\theta)-D(\theta_0) -\dot D_{\theta_0}(\theta-\theta_0) \|}{\| \theta-\theta_0\|} = \\
&\frac{\gamma}{|| \theta-\theta_0||} \left\|\E\left[\sum_{t=0}^{T-1} \left\{  \max_a (\theta_0 + b|| \theta-\theta_0||)^\top  \varphi(S_{t+1},a)  - \max_{a \in \pi^*(S_{t+1})}\theta_0^\top  \varphi(S_{t+1},a)  \right. \right.\right.\\
& \hspace{3in}\left. \left. \left.  - \max_{a \in \pi^*(S_{t+1})}b^\top  \varphi(S_{t+1},a)  \| \theta-\theta_0\| \right\}\varphi(S_t,A_t)^\top \right] \right \|,
\end{align*} 
where $b=(\theta-\theta_0)/ \| \theta-\theta_0\|$. Then, since $\forall a,a' \in \pi^*(S_{t+1})$,  we have $\theta_0^\top  \varphi(S_{t+1},a)=\theta_0^\top  \varphi(S_{t+1},a')$. Thus the following equality holds:
\[
 \max_{a \in \pi^*(S_{t+1})}\theta_0^\top  \varphi(S_{t+1},a)+\max_{a \in \pi^*(S_{t+1})}(\theta-\theta_0)^\top  \varphi(S_{t+1},a) = \max_{a \in \pi^*(S_{t+1})}\theta^\top  \varphi(S_{t+1},a).
\]
Thus,
\begin{align*}
&\frac{||D(\theta)-D(\theta_0) -\dot D_{\theta_0}(\theta-\theta_0) ||}{|| \theta-\theta_0\|} = \\
&\frac{\gamma}{|| \theta-\theta_0||} \left\|\E\left[\sum_{t=0}^{T-1}\left\{  \max_a (\theta_0 + b|| \theta-\theta_0||)^\top  \varphi(S_{t+1},a)  - \max_{a \in \pi^*(S_{t+1})}(\theta_0+b \| \theta-\theta_0\|)^\top  \varphi(S_{t+1},a)  \right\} \varphi(S_t,A_t)^\top  \right] \right \| \\
&\leq  \gamma \E\left[\sum_{t=0}^{T-1} \|  \varphi(S_{t+1},a)\| \left\{  \max_a \left(\frac{\theta_0}{|| \theta-\theta_0||} + b\right)^\top  \varphi(S_{t+1},a)  - \max_{a \in \pi^*(S_{t+1})}\left(\frac{\theta_0}{\| \theta-\theta_0\|}+b\right)^\top  \varphi(S_{t+1},a)  \right\}  \right] \\
&\leq  \gamma \E\left[\sum_{t=0}^{T-1} \| \varphi(S_t,A_t) \|  \left\{  \max_a  b^\top  \varphi(S_{t+1},a)  - \max_{a \in \pi^*(S_{t+1})}b^\top  \varphi(S_{t+1},a)  \right\}  \right]  \\
&\leq  \gamma \E\left[\sum_{t=0}^{T-1}  \| \varphi(S_t,A_t) \| \sum_{a \in \mathcal{A}_{S_{t+1}}} \|  b\|  \| \varphi(S_{t+1},a)\|  \right]   < \infty.
\end{align*} 
The second  inequality follows from Lemma \ref{lem:max} part (II) and the last inequality follows from $\|b\|=1$ and assumption A.5. Also, using Lemma \ref{lem:max} part (I), we have
\[
\lim_{\| \theta-\theta_0\| \rightarrow 0}   \left[  \max_a \left(\frac{\theta_0}{|| \theta-\theta_0||} + b\right)^\top  \varphi(S_{t+1},a) ) - \max_{a \in \pi^*(S_{t+1})}\left(\frac{\theta_0}{\| \theta-\theta_0\|}+b\right)^\top  \varphi(S_{t+1},a)  \right] =0.
\]
We just showed that $\|D(\theta)-D(\theta_0) -\dot D_{\theta_0}(\theta-\theta_0) \| =o(\| \theta-\theta_0\|)$. Since $W^{-1}$ is of full rank matrix,
\[
[D(\theta)-D(\theta_0) -\dot D_{\theta_0}(\theta-\theta_0) ]  W^{-1} [D(\theta)-D(\theta_0) -\dot D_{\theta_0}(\theta-\theta_0) ]^\top =o( \| \theta-\theta_0  \|^2).
\]
Now, we need to show that $\dot D_{\theta_0}(\theta-\theta_0) \dot D_{\theta_0}(\theta-\theta_0)^\top \geq c'' \| \theta-\theta_0 \|^2$. By definition,
\begin{align*}
\dot D_{\theta_0}(\theta-\theta_0) &= \E\left[\sum_{t=0}^{T-1} \left\{ \gamma \max_{a \in \pi^*(S_{t+1})} (\theta - \theta_0 )^\top  \varphi(S_{t+1},a)  -  (\theta - \theta_0 )^\top \varphi(S_{t},A_t)  \right\}\varphi(S_t,A_t)^\top \right] \\
             &=  (\theta - \theta_0 )^\top \E\left[\sum_{t=0}^{T-1} \left\{ \gamma I_{ |\pi^*(S_{t+1})|=1} \varphi(S_{t+1},\pi^*(S_{t+1}))   -  \varphi(S_{t},A_t)  \right\} \varphi(S_t,A_t)^\top \right]  \\
              &+ \gamma \E\left[\sum_{t=0}^{T-1} \left\{ \gamma I_{ |\pi^*(S_{t+1})|>1}  \max_{a \in \pi^*(S_{t+1})} \frac{(\theta - \theta_0 )^\top }{ \|\theta - \theta_0 \|} \varphi(S_{t+1},a)   \right\}\varphi(S_t,A_t)^\top \right]  \|\theta - \theta_0 \|.
\end{align*} 
Let
\begin{align*}
             M_1&=  \E\left[\sum_{t=0}^{T-1} \left\{ \gamma I_{ |\pi^*(S_{t+1})|=1} \varphi(S_{t+1},\pi^*(S_{t+1}))  -  \varphi(S_{t},A_t) \right\}\varphi(S_t,A_t)^\top \right],  \\
              M_2 &= \E\left[\sum_{t=0}^{T-1} \left\{  I_{ |\pi^*(S_{t+1})|>1}  \max_{a \in \pi^*(S_{t+1})} \frac{(\theta - \theta_0 )^\top }{ \|\theta - \theta_0 \|} \varphi(S_{t+1},a)   \right\} \varphi(S_t,A_t)^\top \right].  
\end{align*} 
Then
\[
\dot D_{\theta_0}(\theta-\theta_0) \dot D_{\theta_0}(\theta-\theta_0)^\top = (\theta - \theta_0 )^\top  M_1 M_1^\top (\theta - \theta_0 )+2\gamma (\theta - \theta_0 )^\top  M_1 M_2^\top \|\theta - \theta_0 \| + \gamma^2 \|\theta - \theta_0 \|^2 M_2 M_2^\top.
\]
Assuming that $M_1$ is of full rank (Assumption {\it{A.7}}), we have
\[
(\theta - \theta_0 )^\top  M_1 M_1^\top (\theta - \theta_0 ) \geq \|\theta - \theta_0 \|^2 \lambda_{min},
\]
where $\lambda_{min}$ is the smallest eigenvalue of $M_1^\top M_1$. Also, using  singular value decomposition we have
\[
(\theta - \theta_0 )^\top  M_1 M_2^\top \|\theta - \theta_0 \| \leq \|\theta - \theta_0 \|^2 \sqrt {\lambda_{max}}\|M_2\|, 
\]
where $\lambda_{max}$ is a maximum eigenvalue of $M_1 M_1^\top$. Thus
\[
\dot D_{\theta_0}(\theta-\theta_0) \dot D_{\theta_0}(\theta-\theta_0)^\top \geq  \|\theta - \theta_0 \|^2 \left[ \lambda_{min} -2\gamma \sqrt {\lambda_{max}}\|M_2\| - \gamma^2  \|M_2\|^2 \right].
\]
Therefore,  function $M(.)$ satisfies the first condition of Theorem 3.2.5 in Van Der Vaart and Wellner (1996) for any small enough $\gamma$ such that $\left[\lambda_{min} -2\gamma \sqrt {\lambda_{max}}\|M_2\| - \gamma^2  \|M_2\|^2 \right]>0$. Note that under assumption $A.8$ when $|\pi^*(S_{t+1})|=1$, the latter condition is satisfied automatically. Since $W$ is of full rank (Assumption $A.6$), $\dot D_{\theta_0}(\theta-\theta_0)W^{-1} \dot D_{\theta_0}(\theta-\theta_0)^\top \geq  c \|\theta - \theta_0 \|^2$ for $c>0$.

For the \textbf{second} condition, we need to show that for every large enough $n$, sufficiently small $\delta_n$ and $c>0$
\[
\E \sup_{\| \theta-\theta_0\|\leq \delta_n} \left| [\hat M(\theta) - M(\theta)] - [\hat M(\theta_0) - M(\theta_0)] \right| \leq c\delta_n^2.
\]
Since by definition $\hat D(\hat\theta)=D(\theta_0)=0$, we have
\begin{align*}
\left| [\hat M(\theta) - M(\theta)] - [\hat M(\theta_0) - M(\theta_0)] \right| =& \left| (\hat D(\theta) -\hat D(\hat \theta) )  \hat W^{-1} (\hat D(\theta) -\hat D(\hat \theta) )^\top \right.  \\
& - (\hat D(\theta_0) -\hat D(\hat \theta) )  \hat W^{-1} (\hat D(\theta_0) -\hat D(\hat \theta) )^\top \\
&\left. - ( D(\theta) - D( \theta_0) )  \hat W^{-1} ( D(\theta) - D( \theta_0) )^\top\right|.
\end{align*}
We  show that for every large $n$ such that $\| \theta-\hat \theta\| \leq  \delta_n$ and $\| \theta_0-\hat \theta\| \leq \delta_n$
\begin{align*}
&\E \sup_{\| \theta-\hat \theta\|\leq \delta_n}[ (\hat D(\theta) -\hat D(\hat \theta) )  \hat W^{-1} (\hat D(\theta) -\hat D(\hat \theta) )^\top] \leq c_1\delta_n^2\\
&\E \sup_{\| \hat\theta-\theta_0\|\leq \delta_n}[ (\hat D(\theta_0) -\hat D(\hat \theta) )  \hat W^{-1} (\hat D(\theta_0) -\hat D(\hat \theta) )^\top ]\leq c_2\delta_n^2\\
&\E \sup_{\| \theta-\theta_0\|\leq \delta_n}[ ( D(\theta) - D( \theta_0) )^\top  \hat W^{-1} ( D(\theta) - D( \theta_0) )^\top]\leq c_3\delta_n^2,
\end{align*}
where $c_1$, $c_2$ and $c_3$ are positive constants. Here we show the first inequality and the rest can be shown similarly. By the Cauchy-Schwartz inequality and the fact that $|\max_a f(a) - \max_a g(a)| \leq \max_a |f(a)-g(a)|$, we have
\begin{align*}
|\hat D(\theta) -\hat D(\hat \theta)| = &\left|\P_n \left[\sum_{t=0 }^{T-1} \left\{  \max_a \theta^\top  \varphi(S_{t+1},a)  - \max_a \hat \theta^\top  \varphi(S_{t+1},a)-(\theta-\hat \theta)^\top  \varphi(S_{t},A_t)\right\} \varphi(S_t,A_t)^\top \right]\right| \\
& \leq \P_n \left[   \sum_{t=0 }^{T-1} \| \varphi(S_t,A_t)\| \left\{    \sum_a \| \varphi(S_{t+1},a)\| \| \theta-\hat \theta\|  +  \| \varphi(S_t,A_t)\| \| \theta-\hat \theta\| \right\}   \right].
\end{align*}

Thus, 
\[
|\hat D(\theta) -\hat D(\hat \theta)| \leq m(S,A) \| \theta-\hat \theta\|,
\]
where
\[
m(S,A) = \P_n \left[   \sum_{t=0 }^{T-1} \| \varphi(S_t,A_t)\| \left\{    \sum_a \| \varphi(S_{t+1},a)\|  +  \| \varphi(S_t,A_t)\|  \right\}   \right].
\]
Therefore
\[
\E \sup_{\| \theta-\hat \theta\|\leq \delta_n}[ (\hat D(\theta) -\hat D(\hat \theta) )^\top  \hat W^{-1} (\hat D(\theta) -\hat D(\hat \theta) )] \leq c_1 \delta_n^2, 
\]
where $c_1=\E[m(S,A)^2\|\hat W^{-1}\|]$. Define $c=c_1+c_2+c_3$. This shows that our objective function satisfies the second condition of Theorem 3.2.5 in Van Der Vaart and Wellner (1996) as well. This completes the proof of consistency. \qed

\textbf{Part II (Asymptotic Normality).} Let $\theta=\theta_0+\frac{b}{\sqrt n}$ and
\[
\hat V(b) = n \hat D(\theta_0+b/\sqrt n)  \hat W^{-1} \hat D(\theta_0+b/\sqrt n)^\top - n\hat D(\theta_0)  \hat W^{-1} \hat D(\theta_0)^\top.
\]
Then,
\begin{align*}
\hat D(\theta_0+b/\sqrt n) &= \P_n \left[ \sum_t  \left\{ R_{t+1} + \gamma\max_{a} \theta_0^\top  \varphi(S_{t+1},a) - \theta_0^\top  \varphi(S_t,A_t) -b^\top /\sqrt n \varphi(S_t,A_t)   \right. \right.\\
                    & \hspace{1.0in} \left.   -\gamma\max_{a} \theta_0^\top  \varphi(S_{t+1},a)+ \gamma\max_{a} (\theta_0+b/\sqrt n)^\top  \varphi(S_{t+1},a)\right\}  \varphi(S_t,A_t)^\top \Bigg] \\
                    &= \P_n \left[ \sum_t  \left\{ \delta_{t+1} -b^\top /\sqrt n \varphi(S_t,A_t) - \gamma\max_{a} \theta_0^\top  \varphi(S_{t+1},a)  \right. \right.\\
                    & \hspace{2.5in}  \left.  + \gamma\max_{a} (\theta_0+b/\sqrt n)^\top  \varphi(S_{t+1},a) \right\} \varphi(S_t,A_t)^\top  \Bigg].
\end{align*}
Using the above equation and the defined $\delta_{t+1}$, the function $\hat V(b)$ can be written as {\small{
\begin{align*}
n \P_n \left[ \sum_t  \left\{ 2\delta_{t+1} -b^\top /\sqrt n \varphi(S_t,A_t) - \gamma\max_{a} \theta_0^\top  \varphi(S_{t+1},a)+ \gamma\max_{a} (\theta_0+b/\sqrt n)^\top  \varphi(S_{t+1},a) \right\} \varphi(S_t,A_t)^\top \right]  \hat W^{-1} \\
 \times \P_n \left[ \sum_t  \left\{ -b^\top /\sqrt n \varphi(S_t,A_t) - \gamma\max_{a} \theta_0^\top  \varphi(S_{t+1},a)+ \gamma\max_{a} (\theta_0+b/\sqrt n)^\top  \varphi(S_{t+1},a) \right\} \varphi(S_t,A_t)^\top \right]^\top.
 \end{align*} }}
The $\hat V(b)$ can be decomposed to the following parts: {\small{
\begin{enumerate}
\item[I.]  $ n \P_n \left[ \sum_t b^\top /\sqrt n \varphi(S_t,A_t)\varphi(S_t,A_t)^\top  \right]  \hat W^{-1} \P_n \left[ \sum_t  b^\top /\sqrt n \varphi(S_t,A_t) \varphi(S_t,A_t)^\top\right]^\top \rightarrow_p b^\top  W b$
\item[II.]  $n \P_n \left[ \sum_t  \zeta_{t+1}(\theta) \varphi(S_t,A_t)^\top \right]  \hat W^{-1}  \P_n \left[ \sum_t \zeta_{t+1}(\theta) \varphi(S_t,A_t)^\top \right]^\top \rightarrow_p \E\left[ \sum_t \psi_{t+1}  \varphi(S_t,A_t)^\top \right]  W^{-1} $ \\ 
$  \text{\hspace{5in}} \times \E\left[ \sum_t  \psi_{t+1} \varphi(S_t,A_t)^\top \right]^\top  $ 
\item[III.] $-2 n \P_n \left[ \sum_t   \delta_{t+1} \varphi(S_t,A_t)^\top \right]    \hat W^{-1}  \P_n \left[ \sum_t  b^\top/\sqrt n \varphi(S_t,A_t) \varphi(S_t,A_t)^\top \right]^\top \rightarrow_d -2 Z_{\infty} b $
\item[IV.] $2n  \P_n \left[ \sum_t    \delta_{t+1}  \varphi(S_t,A_t)^\top \right]    \hat W^{-1}   \P_n \left[ \sum_t\zeta_{t+1}(\theta)  \varphi(S_t,A_t)^\top \right]^\top \rightarrow_d 2 Z_{\infty}  W^{-1} \E\left[ \sum_t  \psi_{t+1}\varphi(S_t,A_t)^\top \right]^\top$
\item[V.]  $-2n  \P_n \left[ \sum_t  b^\top /\sqrt n \varphi(S_t,A_t) \varphi(S_t,A_t)^\top \right] \hat W^{-1}   \P_n \left[ \sum_t \zeta_{t+1}(\theta)  \varphi(S_t,A_t)^\top \right]^\top  \rightarrow_p -2 b^\top  \E\left[ \sum_t  \psi_{t+1} \varphi(S_t,A_t)^\top \right]^\top$
\end{enumerate} }}
where $\zeta_{t+1}(\theta)= - \gamma \max_{a} \theta_0^\top \varphi(S_{t+1},a)+ \gamma \max_{a} (\theta_0+b/\sqrt n)^\top \varphi(S_{t+1},a) $. The first part follows from a law of large numbers. Here, we prove  Part II and the rest follow similarly.

By adding and subtracting $\gamma \max_{a \in \pi^*} b^\top /\sqrt n  \varphi(S_{t+1},a)$ to $\zeta_{t+1}(\theta)$, we have
\[
\zeta_{t+1}(\theta) = - \gamma \max_{a \in \pi^*} (\theta_0+b/\sqrt n)^\top \varphi(S_{t+1},a)+ \gamma \max_{a} (\theta_0+b/\sqrt n)^\top  \varphi(S_{t+1},a) + \gamma \max_{a \in \pi^*} b^\top /\sqrt n  \varphi(S_{t+1},a).
\]\
Thus, by Lemma \ref{lem:max}, when $\| b/\sqrt n \| \rightarrow 0$ as $n \rightarrow \infty$, we have
\begin{align*}
\sqrt n [ \zeta_{t+1}(\theta)  ] \rightarrow \psi_{t+1},
 \end{align*}
where $\psi_{t+1} = \gamma \max_{a \in \pi^*} b^\top   \varphi(S_{t+1},a) $ and by low of large numbers
 \[
 \sqrt n  \P_n \left[ \sum_t  \zeta_{t+1}(\theta) \varphi(S_t,A_t)^\top \right] \rightarrow_p  \E\left[ \sum_t  \psi_{t+1} \varphi(S_t,A_t)^\top \right]^\top.
 \]

We just showed that $\hat V(b)$ converges in distribution to
\begin{align*}
V(b) &= -2 Z_{\infty}  b +2 Z_{\infty} W^{-1} \E\left[ \sum_t \psi_{t+1}\varphi(S_t,A_t)^\top  \right]^\top+ b^\top  W b-2 b^\top  \E\left[ \sum_t \psi_{t+1} \varphi(S_t,A_t)^\top \right]^\top \\
   & \hspace{2.5in}+\E\left[ \sum_t  \psi_{t+1} \varphi(S_t,A_t)^\top \right]  W^{-1} \E\left[ \sum_t \psi_{t+1}  \varphi(S_t,A_t)^\top \right]^\top.
\end{align*}
By assuming that $V(b)$ is uniquely minimized in $b$ and by continuity and local convexity of $V(b)$, 
\[
\sqrt n (\hat \theta - \theta_0) = \arg \min_b \hat V(b)  \rightarrow_d      \arg \min_b V(b),
\]
which is a consequence of the epi-convergence results of \cite{geyer1994asymptotics}. 
Note that when $\gamma=0$, $\arg \min_b V(b) = Z_{\infty}  W^{-1}$. Also, under assumption $A.8$, that is, when $|\pi^*|=1$, and $\gamma$ is small enough, $\arg \min_b V(b) = Z_{\infty} \Gamma $, where
\begin{align*}
  \Gamma= \Bigg[I -  &\gamma\left.W^{-1} \E\left( \sum_t  \varphi(S_{t+1},\pi^*) \varphi(S_t,A_t )^\top \right) \right]^\top \\
& \hspace{.1in}\left[W+\gamma^2\E\left( \sum_t  \varphi(S_{t+1},\pi^*) \varphi(S_t,A_t)^\top  \right) 
 W^{-1}\E\left( \sum_t  \varphi(S_{t+1},\pi^*) \varphi(S_t,A_t)^\top  \right)^\top \right.\\
 &\hspace{1in} \left. -2\gamma \E\left( \sum_t  \varphi(S_{t+1},\pi^*) \varphi(S_t,A_t)^\top  \right)^\top \right]^{-1},
\end{align*}
where $I$ is an identity matrix. Hence,
\[
 \sqrt n (\hat \theta - \theta_0) = \arg \min_b \hat V(b)  \rightarrow_d  N(0,\Gamma^\top \Sigma \Gamma),
\]
with $\Sigma=\E \left[ \{\sum_t    \delta_{t+1}  \varphi(S_t,A_t)^\top\}^\top \{\sum_t    \delta_{t+1}  \varphi(S_t,A_t)^\top\} \right]$. \qed

\section*{S4. The effect of the choice of $\gamma$ on the estimated optimal regime.}

In this section, we have estimated the optimal treatment regime under the simulation scenario discussed in the manuscript for different values of $\gamma$. To better reflect the effect of $\gamma$, we assume that there is no death (i.e., $C_t$=0 for all $t=0,...,15$), and the reward function is defined as 
\begin{itemize}
\item $R_{t}=1$ if $A1c_t<7$, -5 if $7<A1c_t \& D_t=1$ and zero otherwise.
\end{itemize}
For  smaller values of $\gamma$ ($\gamma=0.1$),  the estimated optimal policy will be more myopic and  does not suggest augmenting any medication. This happens because the side effect outweighs the treatment effect. However, as $\gamma$ gets larger, the optimal policy suggests to augment more treatments simply because the long-term effect of treatments outweighs the side effects. This shows that $\gamma$ balances the immediate and long-term effect of treatments.  Results are presented in Figure \ref{fig:sensg}.
\begin{figure}[t]
 \centering
  \makebox{\includegraphics[ scale=0.50]{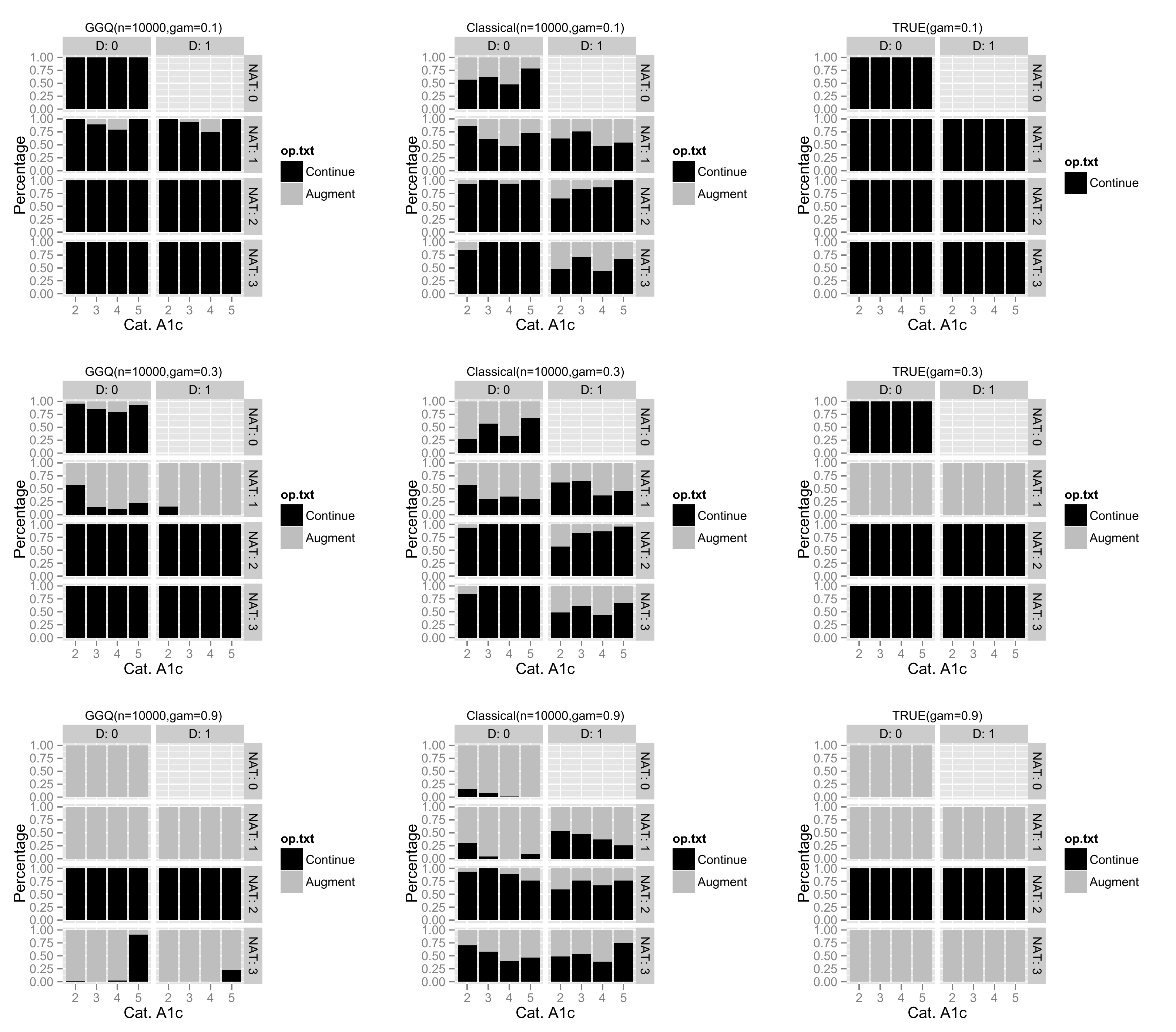}}
 \caption{ Simulation: The effect of the choice of $\gamma$ on the estimated optimal regime.}
 \label{fig:sensg}
\end{figure}

\section*{S5. The effect of the choice of tuning parameters on the estimated optimal regime.}

In this section, we discuss the effect of the tuning parameters $(\alpha ,\beta)$ on the estimated optimal treatment regime under the simulation scenario discussed in the manuscript. We generated 500 datasets of size 2000 and applied the proposed methods using various choices of tuning parameters: 
\begin{itemize}
\item[] 1. $\alpha=\frac{\nu}{k\log(k)}$ and $\beta=\frac{\nu}{k}$ for $\nu=$0.050, 0.025, and 0.010
\item[] 2. $\alpha=\frac{\nu}{k}$ and $\beta=\frac{\nu}{k^{3/4}}$ for $\nu=$0.050, 0.025, and 0.010
\item[] 3. $\alpha=\frac{\nu}{k}$ and $\beta=\frac{\nu}{k^{1/3}}$ for $\nu=$0.050, and 0.010.
\end{itemize}

The parameter $\nu$ specifies the step size (increment size) for each choice of the tuning parameters. Figure \ref{fig:senstun} shows that as long as the step sizes are not very small, the stochastic minimization algorithm has a good performance. However, when the step sizes are small ($\nu=0.010$), the algorithm fails to converge to the true values because it cannot reach the true minimizers of the objective function.  Table \ref{tab:senstun} presents the value of the objective function at the estimated $\hat \theta$ and the number of required iterations  to converge ($K$) using different tuning parameters. The value of the objective function $M(\theta)$ for small $\nu$ is more than  twice the value of $M(\theta)$ for larger values of $\nu$, which indicates the lack of convergence to the true minimizers. Based on this result, the first choice of tuning parameter, $\nu=0.05$, outperforms the other choices.

Figure \ref{fig:senstunsd} displays the effect of tuning parameters on  standard errors. The vertical axis is the ratios of the standard errors obtained by different simulation scenarios over the standard error obtained by   $\alpha=\frac{\nu}{k\log(k)}$ and $\beta=\frac{\nu}{k}$ and $\nu=0.050$. For example, the vertical axis in the first plot is
\[
\frac{\text{S.D. of $\hat \theta$ when  $\alpha=\frac{\nu}{k\log(k)}$, $\beta=\frac{\nu}{k}$ and $\nu=0.025$}}{\text{S.D. of $\hat \theta$ when  $\alpha=\frac{\nu}{k\log(k)}$, $\beta=\frac{\nu}{k}$ and $\nu=0.050$}}.
\]
Note that the reference S.D.s in the denominator includes the tuning parameter values used in the main simulation study in Section 4, which is shown to converge to the true values. 
 In the first two rows, the ratios deviate more from one as the step size ($\nu$) gets smaller. This can be due to stoping the updates before converging to the minimizer of the objective function.

\begin{figure}[t]
 \centering
  \makebox{\includegraphics[ scale=0.50]{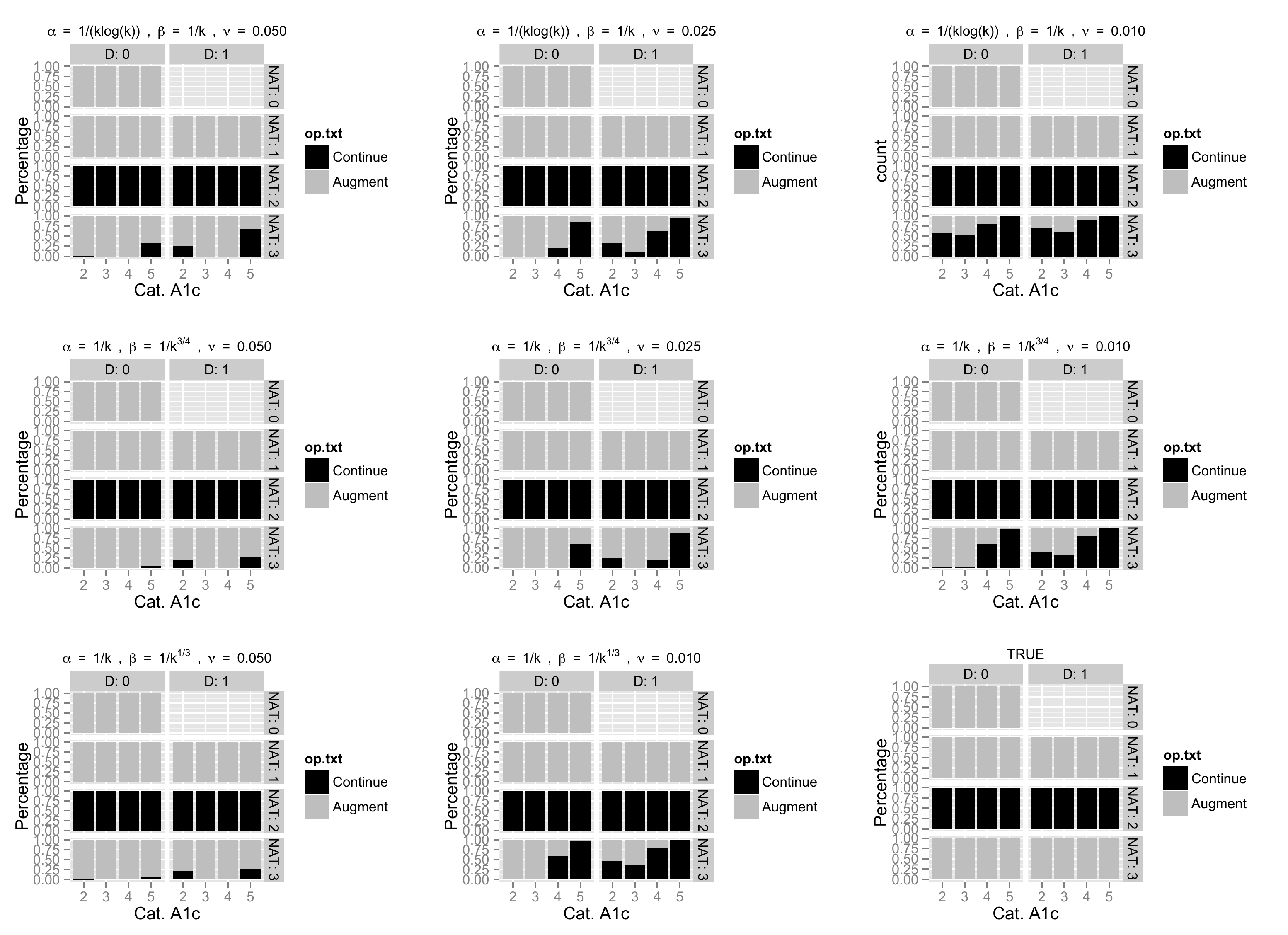}}
 \caption{ Simulation: The effect of the choice of tuning parameters on the estimated optimal regime.}
 \label{fig:senstun}
\end{figure}

\begin{table} [t]
\centering
\caption{Simulation: The effect of the choice of tuning parameters on the estimated optimal regime. }
\begin{tabular}{l|ccc |c  cc | cc cc} \hline
     &   \multicolumn{3}{c}  {$\alpha=\frac{1}{k\log(k)}$, $\beta=\frac{1}{k}$}&  \multicolumn{3}{c} {$\alpha=\frac{1}{k}$, $\beta=\frac{1}{k^{3/4}}$} &   \multicolumn{2}{c} {$\alpha=\frac{1}{k}$, $\beta=\frac{1}{k^{1/3}}$}   \\ 
     & $\nu=0.05$  &$\nu=0.025$ &$\nu=0.01$ & $\nu=0.05$  &$\nu=0.025$ &$\nu=0.01$&$\nu=0.05$  &$\nu=0.01$    \\ \hline
  & & & & & & &  \\ 
$M(\theta)$     &0.007 &0.011& 0.025 & 0.007&0.008&0.017&0.007&0.018 \\ 
$K$ &14.03      &13.22&12.86& 21.35&20.66& 19.44&21.77&19.45 \\ \hline
\end{tabular}
\label{tab:senstun}
\end{table}

\begin{figure}[t]
 \centering
  \makebox{\includegraphics[ scale=0.65]{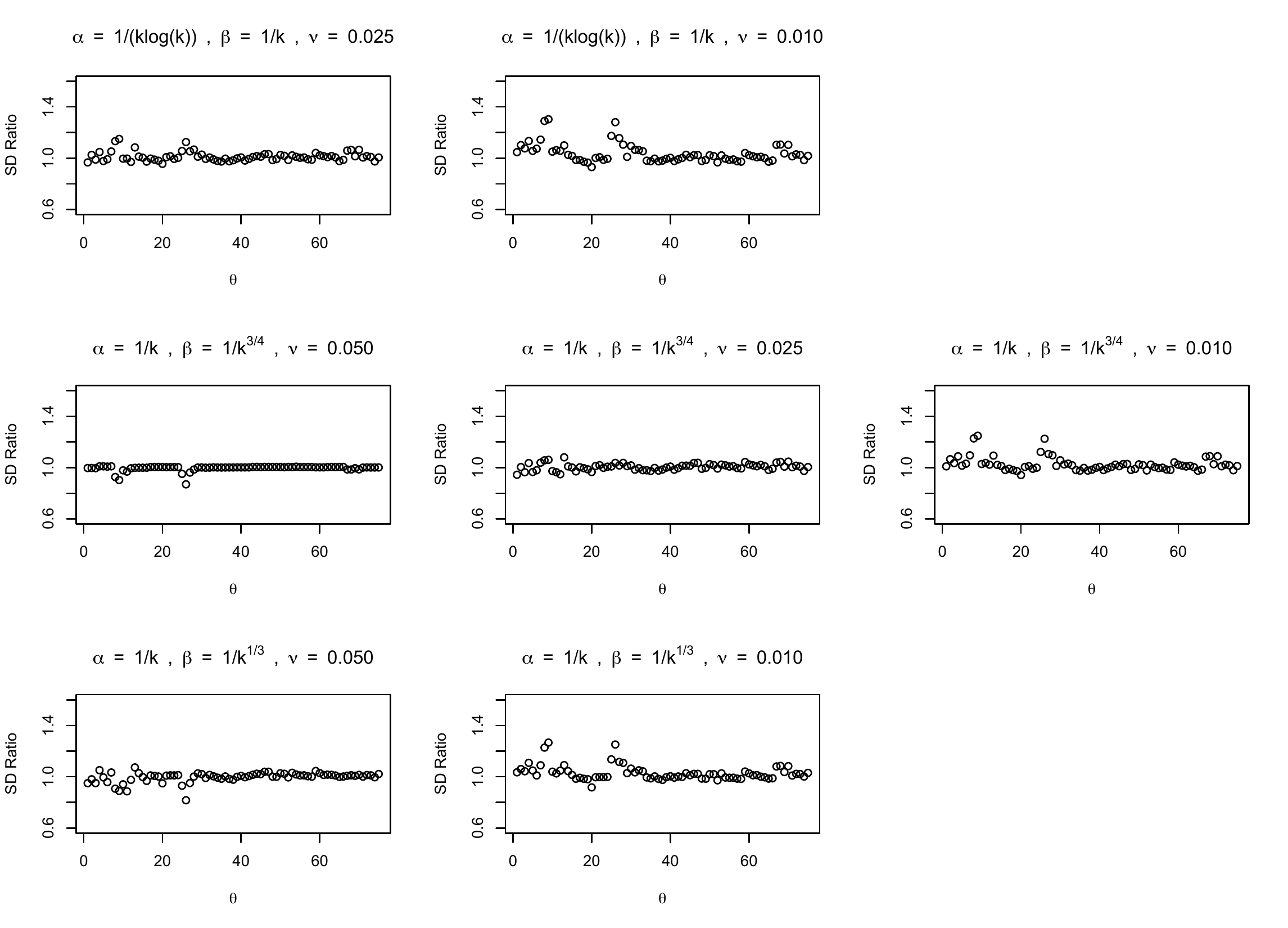}}
 \caption{ Simulation: The effect of the choice of tuning parameters on the standard errors.}
 \label{fig:senstunsd}
\end{figure}

\newpage
\bibliographystyle{Biometrika}
\bibliography{GGQbib}

\end{document}